\documentclass[submission,copyright,creativecommons]{eptcs}

\usepackage{iftex}
\usepackage{graphicx}
\usepackage{setspace}
\usepackage{amsmath,amssymb}
\usepackage{proof, amssymb, my-macro2}
\usepackage{amsthm}
\usepackage{enumitem}

\ifpdf
  \usepackage{underscore}         
  \usepackage[T1]{fontenc}        
\else
  \usepackage{breakurl}           
\fi

\newtheorem{definition}{Definition}
\newtheorem{lemma}{Lemma}
\newtheorem{theorem}{Theorem}
\title{On the Reconstruction of Orthomodular Logic via Restricted Negation}
\author{Tomoaki Kawano
\institute{Faculty of Informatics\\
Kanagawa University\\
Yokohama, Japan}
\email{tomoakikawano.tk@gmail.com}
\and
Ryo Kashima
\institute{Department of Mathematical and Computing Science\\
Institute of Science Tokyo\\
Tokyo, Japan}
\email{kashima@is.titech.ac.jp}
}
\def\titlerunning{On the Reconstruction of Orthomodular Logic via Restricted Negation}
\def\authorrunning{T. Kawano \& R. Kashima}
\begin{document}
\maketitle

\begin{abstract}
Orthomodular logic has long been studied as a fundamental logical framework within quantum logic. In addition to its lattice-theoretic foundations, various formulations have been proposed that incorporate specialized implication connectives and more complex logical connectives.
In this study, we reconstruct orthomodular logic by introducing a connective referred to as restricted negation, which has received comparatively little attention in the existing literature. We demonstrate that this connective provides a natural and expressive basis for formulating the core principles of orthomodular logic. Furthermore, as a main result, we construct a new proof system for orthomodular logic founded on this connective.
\end{abstract}

\section{Introduction}

{\it Quantum logic} (QL) is a logical framework that studies the distinctive properties of propositions arising in quantum mechanics and quantum information science. 
In contrast to classical propositions, quantum mechanical statements exhibit non-classical features such as incompatibility and the failure of distributive laws, which make classical logic insufficient as a foundational framework for quantum theory. 
Research in quantum logic has proceeded in various directions, including the development of proof systems, the analysis of algebraic and lattice-theoretic models, applications to quantum information science, and investigations into their semantic and structural properties \cite{bat2,birk,HANDQ,HANDQS}.  

Among the systems studied in this context, {\it orthomodular logic} (OML) plays a fundamental role.
Quantum states are represented as vectors in a Hilbert space, while observational propositions are associated with closed subspaces of it.
OML employs an {\it orthomodular lattice}, an abstraction of the lattice of closed subspaces, as its model.
A notable difficulty of OML is that it employs the same logical symbols as classical logic, such as conjunction, disjunction, and negation, while assigning them meanings that differ in subtle but essential ways. This similarity in notation, combined with a divergence in behavior, often complicates both the interpretation and the formal analysis of the logic \cite{birk,fag,fin}.

In this study, we investigate a particular logical connective in OML that is motivated by quantum-mechanical considerations. 
Here, we consider connectives constructed by combining the basic connectives {\it conjunction} $\wedge$ and {\it negation} $\sim$.
Intuitively, each propositional variable represents a closed subspace of a Hilbert space, while $\wedge$ yields the intersection subspace and $\sim$ yields the orthogonal complement.
Although several quantum-logical connectives derived from these connectives have been investigated, most existing studies focus on
implication-like connectives such as the {\it Sasaki hook} $\sim\!\!(A \wedge\!\sim\!\!(A \wedge B))$ or projection-like
connectives such as the {\it Sasaki projection} $A \wedge\!\sim\!\!(A \wedge\!\sim\!\!B)$ \cite{fin,har,kotas,ozawa2,ozawa1,sasaki}.
These connectives are associated with physically important meanings.
For example, Sasaki hook can be interpreted as ``$B$ holds after projection onto the subspace in which $A$ is true.''
In this study, we focus on {\it restricted negation} $A \wedge \sim\!\!(A \wedge B)$, a connective that has received little attention.
It has not been systematically studied as a primitive connective for OML.
In particular, its expressive power, its relation to the orthomodular law,
and its proof-theoretic behavior remain unclear.
Although restricted negation is formally reminiscent of the Sasaki hook and related connectives, its significance arises from a different point of view on Hilbert spaces.
Intuitively, this connective corresponds to taking the orthogonal complement of subspace $B$ within the subspace $A$.

The main contributions of this study are as follows. First, we clarify
the algebraic and Hilbert-space interpretation of restricted negation.
In particular, we show that this connective behaves as a relative
orthogonal complement not only in Hilbert-space models but also in
orthomodular lattices. Second, we show that conjunction and negation are
definable from restricted negation together with a constant. Hence the
language of OML can be reconstructed from a single primitive connective.
Third, we show that 
the orthomodular law admits a particularly simple formulation in the language based on restricted negation. 
Finally, we construct a proof system based on restricted negation, in which the orthomodular structure is naturally
incorporated into the inference rules.


Proof systems for OML have been extensively studied \cite{uwe,fag,Fazio2023,tom3,TKMB,Kornell2024,nis1,nis2,samb}.
Each of these approaches has its own distinctive advantages, and no single approach has become predominant.
In ordinary OML, at least two logical connectives, conjunction and negation, are required. However, since both of these connectives can be expressed in terms of restricted negation and the constant, all propositions can be handled using a single connective.
This places the present work in the broader tradition of single-connective formulations, such as classical logic based on the Sheffer stroke. Related approaches have also been studied in QL using the Sheffer stroke \cite{Chajda,McCune}. Like restricted negation, the Sheffer stroke provides a binary operator from which the usual lattice operators can be recovered. The motivations of the two approaches, however, are different. McCune et al. \cite{McCune} employ the Sheffer stroke primarily to obtain compact equational bases and single axioms, whereas restricted negation is motivated by relative orthogonal complementation and is treated here as a primitive logical connective. Our main concern is therefore its logical interpretation, its relation to the orthomodular law, and the development of a sound and complete proof system based on it.
There are also previous studies of OML based on a single connective, especially implication together with constants \cite{OBrunet,fin,har}. In contrast, the present approach centers on a negation-like connective and thus provides a different perspective on the structure and proof theory of OML.
Our new system allows the concept of orthomodular law to be naturally incorporated into the inference rules. 
Orthomodular law has a rather complicated form when formulated using conjunction and negation, whereas it can be represented in a simple form when expressed in terms of 
restricted negation.
Thus, restricted negation is not merely an additional derived connective;
it provides an alternative primitive through which the structure of OML
can be reconstructed.
Since the {\it decidability} of the validity problem for OML has not yet been established, the development of proof systems from diverse perspectives is an important task.

In Section \ref{OMbasic}, we review the basic notions of OML. In Section \ref{RN}, we analyze the fundamental properties of restricted negation. In Section \ref{SCRN}, we construct a new proof system using restricted negation.

\section{Orthomodular logic}\label{OMbasic}

\begin{definition}[Language $\mathcal{L}_1$ and formulas of OML]  
\end{definition}

The language $\mathcal{L}_1$ of OML consists of a countable infinite set of propositional variables $p, q, r,\ldots$, and the logical connectives $\sim, \wedge$. 
Formulas of OML (for $\mathcal{L}_1$) are defined as follows:

\begin{description}
\item$A :: = p |\mkern2mu \sim\!A |\mkern2mu A \wedge A$
\end{description}

For the sake of simplicity, disjunction and implication (Sasaki hook) are defined as abbreviations: $A \vee B$ = $\sim\!(\sim\!A \wedge \sim\!B)$, $A \to_S B =\mkern2mu \sim\!(A \wedge \sim\!(A \wedge B))$.

In what follows, we review two well-known semantics for OML. Despite their equivalence, the two semantics offer different technical advantages, and the most effective choice depends on the theorem to be proved. 
(For Theorem \ref{subAlattice}, the lattice representation is especially convenient, while for the results surrounding Theorem \ref{compOMRp}, the Kripke model provides a more suitable framework.)

\begin{definition}[Ortho lattices and orthomodular lattices]
\end{definition}

An {\it ortho lattice} is a complemented lattice $(L, \sqcap, \sqcup, ', \bf{1}, \bf{0})$ that 
satisfies the following conditions for all $a, b \in L$:

\begin{enumerate}

\item $\exists ! a'$,

\item $a'' = a$,

\item if $a \sqsubseteq b $, then $b' \sqsubseteq a'$,

\item $a \sqcup b = (a' \sqcap b')'$,

\item $a \sqcap a' = \bf{0}$, 

\item $a \sqcup a' = \bf{1}$.

\end{enumerate}

\noi
where $\exists !$ denotes unique existence, and $a \sqsubseteq b$ is defined as $a \sqcap b = a$ (or, equivalently, $a \sqcup b = b$).

An {\it orthomodular lattice} is defined as an ortho lattice satisfying the following {\it orthomodular law}:

\begin{description}

\item
$\forall a, b \in L$, $a \sqcap ( a' \sqcup (a \sqcap b)) \sqsubseteq  b $
\ \ \ \  (or, equivalently,
$a \sqcap ( a \sqcap (a \sqcap b)')' \sqsubseteq  b$) 

\end{description}

The orthomodular law admits the following equivalent formulations.

\begin{description}

\item
$\forall a, b \in L$, $a \sqcap ( a \sqcap (a \sqcap b)')'  = a \sqcap  b$

\end{description}

Intuitively, an orthomodular lattice can be regarded as an abstraction of a Hilbert space.
Elements of $\mathcal{L}$ represent closed subspaces of a Hilbert space, $\bf{1}$ represents the entire space, $\bf{0}$ represents the space consisting only of the origin, $'$ represents the orthogonal complement, $\sqcap$ represents the intersection of two subspaces, and $\sqcup$ represents the subspace spanned by two subspaces.
One of the main features of orthomodular lattice is that the {\it distributive law} 
$a \sqcap (b \sqcup c) = (a \sqcap b) \sqcup (a \sqcap c)$ does not hold.
For example, when we consider a two-dimensional complex Hilbert space, 
the set of all one-dimensional closed subspaces, $\bf 0$, and ${\bf 1}$ forms an orthomodular lattice; however, because of $C_1(|0 \rangle + |1 \rangle) \sqcap (C_2|0 \rangle \sqcup C_3|1 \rangle) = C_1(|0 \rangle + |1 \rangle) \sqcap {\bf 1} = C_1(|0 \rangle + |1 \rangle)$ and $(C_1(|0 \rangle + |1 \rangle) \sqcap C_2|0 \rangle) \sqcup (C_1(|0 \rangle + |1 \rangle) \sqcap C_3|1 \rangle) = {\bf 0} \sqcup {\bf 0} = {\bf 0}$, the distributive law  does not hold.

\begin{definition}[Valuations and validity]
\end{definition}

A {\it valuation} $v$ is a function from the set of propositional variables to $L$ of $(L, \sqcap, \sqcup, ', \bf{1}, \bf{0})$. Then, we extend the definition of $v$ inductively as follows:

\begin{description}

\item $v(\sim A) = v(A)'$,

\item $v(A \wedge B) = v(A) \sqcap v(B)$.

\end{description}

We say that $A$ is {\it valid} in orthomodular lattices if $v(A) = {\bf 1}$ for all $v$ and all orthomodular lattices.

\begin{definition}[O-frame]
\end{definition}
An {\it O-frame} is a tuple $(W, \perp)$ where, 

\begin{description}

\item $W$ is a nonempty set. \  

\item $\bot$ is a binary relation on $W$ which is irreflexive and symmetric. 

\end{description}

Given $X \subseteq W$, $X^{\bot} \stackrel{\mathrm{def}}{=} \{ w \in W |  \forall x \in X$, $ x \bot w \}$. We say that $X$ is {\it $\bot$-closed} or {\it testable} if $X^{\bot\bot} = X$.
Intuitively, each element of $W$ represents a one-dimensional closed subspace (quantum states), $\bot$ represents the orthogonality relation between states, and a $\bot$-closed set represents a closed subspace.
Although the symbol $\bot$ appearing here coincides with the symbol for contradiction, it is used here to denote orthogonality. This notation is standard in the tradition of OML.

\begin{definition}[OM-model and validity]
\end{definition}
An {\it OM-model} is a tuple $(W, \perp, \sigma, V)$ where, 
\begin{list}{}{
  \setlength{\leftmargin}{8pt}
}
\item $(W, \perp)$ is an O-frame.
\item $\sigma$ is a set of $\bot$-closed sets that satisfies the following conditions:
\begin{description}
\item $W \in \sigma, \emptyset \in \sigma$,
\item closed under $\cap$ and $^{\bot}$,
\item for all $X, Y \in \sigma$,
$X \cap (X \cap (X \cap Y)^{\bot})^{\bot} = X \cap Y$ \ (orthomodular law).
\end{description} 
\item $V$ is a function from the set of propositional variables to $\sigma$. Then, we extend the definition of $V$ inductively as follows:
\begin{description}
\item $V(\sim A) = V(A)^{\bot}$,
\item $V(A \wedge B) =  V(A) \cap V(B)$.
\end{description}
\end{list}

We say that $A$ is {\it true} at $x \in W$ if $x \in V(A)$ and write it as $x \models A$.
We say that $A$ is {\it false} at $x \in W$ if $A$ is not true at $x$.
We say that $A$ is {\it valid} in OM-models if $V(A) = W$ for all OM-models.

For readability, we shall use the following notation throughout: $v(A) = v(B)$ and $V(A) = V(B)$ will be denoted by $A = B$, and $v(A) \sqsubseteq v(B)$ and $V(A) \subseteq V(B) $ by $A \leq B$.

\begin{theorem}[\cite{HANDQ}]
For every OM-model, if $V(p)$ is $\bot$-closed for every propositional variable $p,q,r,\ldots$, 
then for every formula $A$, $V(A)$ is $\bot$-closed.
\end{theorem}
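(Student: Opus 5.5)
The proof goes by structural induction on $A$, using two facts about the operator $X \mapsto X^{\bot}$ on an O-frame $(W,\bot)$: \emph{every} set of the form $Y^{\bot}$ is $\bot$-closed, and the intersection of $\bot$-closed sets is $\bot$-closed. Only irreflexivity and symmetry of $\bot$ are needed; in fact only symmetry is used.

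First I record the Galois-connection properties of $^{\bot}$. It is antitone: if $X \subseteq Y$ then $Y^{\bot} \subseteq X^{\bot}$, since any $w$ orthogonal to all of $Y$ is orthogonal to all of $X$. It is extensive: $X \subseteq X^{\bot\bot}$, because for $x \in X$ and any $w \in X^{\bot}$ we have $x \bot w$, hence $w \bot x$ by symmetry, so $x \in X^{\bot\bot}$. From these, $X^{\bot\bot\bot} = X^{\bot}$. Indeed, extensivity applied to $X^{\bot}$ gives $X^{\bot} \subseteq X^{\bot\bot\bot}$. Conversely, applying antitonicity to $X \subseteq X^{\bot\bot}$ gives $X^{\bot\bot\bot} \subseteq X^{\bot}$.

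Next, suppose $X, Y$ are $\bot$-closed. Since $X \cap Y \subseteq X$, antitonicity applied twice gives $(X\cap Y)^{\bot\bot} \subseteq X^{\bot\bot} = X$, and likewise $(X\cap Y)^{\bot\bot} \subseteq Y$. Hence $(X\cap Y)^{\bot\bot} \subseteq X \cap Y$. The reverse inclusion is extensivity, so $X \cap Y$ is $\bot$-closed.

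Now the induction. For a propositional variable $p$, $V(p)$ is $\bot$-closed by hypothesis; it also lies in $\sigma$, whose members are $\bot$-closed by definition. For $\sim\! A$, $V(\sim\! A) = V(A)^{\bot}$, and $V(A)^{\bot\bot\bot} = V(A)^{\bot}$ by the identity above. This step does not even need the induction hypothesis. For $A \wedge B$, $V(A\wedge B) = V(A)\cap V(B)$ is an intersection of two sets that are $\bot$-closed by the induction hypothesis, so it is $\bot$-closed by the previous paragraph. This completes the induction.

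No step is a real obstacle. The only point needing care is the triple-perp identity, where symmetry of $\bot$ is essential for extensivity. Alternatively, one could observe that $\sigma$ is closed under $\cap$ and $^{\bot}$ and contains all $V(p)$, so every $V(A)$ lies in $\sigma$ and is therefore $\bot$-closed. I will present the frame-level argument above, since it shows the statement holds independently of $\sigma$.
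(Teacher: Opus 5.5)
Your proof is correct. The paper gives no proof of its own and only cites \cite{HANDQ}, but your argument is the standard one for this fact: $^{\bot}$ is antitone, $X \subseteq X^{\bot\bot}$ by symmetry, hence $X^{\bot\bot\bot}=X^{\bot}$ and $\bot$-closed sets are closed under $\cap$, followed by induction on formulas. As you observe, under the paper's own definition of OM-model the statement is even immediate, since $V$ takes values in $\sigma$, which consists of $\bot$-closed sets and is closed under $\cap$ and $^{\bot}$.
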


\begin{theorem}[\cite{HANDQ}]
For every formula $A$, $A$ is valid in orthomodular lattices iff $A$ is valid in OM-models.
\end{theorem}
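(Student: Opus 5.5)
We prove the two directions separately, translating between OM-models and orthomodular lattices through their algebras of closed sets.

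\emph{From lattices to OM-models (validity in OM-models implies validity in orthomodular lattices).} Fix an OM-model $(W,\bot,\sigma,V)$. We first check that $(\sigma,\cap,\sqcup_\sigma,{}^{\bot},W,\emptyset)$ is an orthomodular lattice, where $X\sqcup_\sigma Y := (X^{\bot}\cap Y^{\bot})^{\bot}$.
\begin{enumerate}
\item Since $\sigma$ is closed under $\cap$ and ${}^{\bot}$, the operations are well defined. Meets in $\sigma$ are intersections. The join is the least $\bot$-closed superset, because ${}^{\bot}$ is antitone and $X^{\bot\bot}=X$ on $\sigma$.
\item The complement is involutive on $\bot$-closed sets, and antitone.
\item $X\cap X^{\bot}=\emptyset$ by irreflexivity.
\item $X\sqcup_\sigma X^{\bot}=(X^{\bot}\cap X)^{\bot}=\emptyset^{\bot}=W$.
\item The orthomodular law is a stated axiom on $\sigma$.
\end{enumerate}
Hence $V$ is a lattice valuation into $\sigma$, and the clauses for $\sim$ and $\wedge$ agree. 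Validity in all orthomodular lattices therefore gives $V(A)=W$.

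\emph{From OM-models to lattices (validity in OM-models implies validity in orthomodular lattices).} Given an orthomodular lattice $L$ and a valuation $v$, we build an OM-model refuting any $A$ with $v(A)\neq\mathbf 1$. Take
\begin{itemize}
\item $W=L\setminus\{\mathbf 0\}$;
\item $x\bot y$ iff $x\sqsubseteq y'$, which is symmetric by antitonicity and involution, and irreflexive since $x\sqsubseteq x'$ forces $x=\mathbf 0$;
\item $\phi(a)=\{x\in W\mid x\sqsubseteq a\}$.
\end{itemize}
The key computation is $\phi(a)^{\bot}=\phi(a')$. One inclusion is clear: if $y\sqsubseteq a'$ and $x\sqsubseteq a$, then $x\sqsubseteq y'$. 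For the converse, suppose $y\bot x$ for every nonzero $x\sqsubseteq a$. If $a\neq\mathbf 0$, take $x=a$ to get $y\sqsubseteq a'$. If $a=\mathbf 0$, then $\phi(a)^{\bot}=W=\phi(\mathbf 1)$.

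From this key computation we obtain:
\begin{itemize}
\item each $\phi(a)$ is $\bot$-closed;
\item $\phi(a\sqcap b)=\phi(a)\cap\phi(b)$;
\item $\phi$ is injective, since $a\in\phi(a)$ for $a\neq\mathbf 0$.
\end{itemize}
Let $\sigma=\phi[L]$. It contains $W=\phi(\mathbf 1)$ and $\emptyset=\phi(\mathbf 0)$, and it is closed under $\cap$ and ${}^{\bot}$. It satisfies the orthomodular law because $\phi$ transports the lattice law: both sides of the law are images under $\phi$ of equal lattice terms. Setting $V(p)=\phi(v(p))$, induction on formulas gives $V(A)=\phi(v(A))$. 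So $v(A)\neq\mathbf 1$ gives $V(A)\neq W$.

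\emph{Main obstacle.} The one delicate point is the converse inclusion $\phi(a)^{\bot}\subseteq\phi(a')$, which needs the degenerate case $a=\mathbf 0$ handled separately, as above. Everything else is routine verification.
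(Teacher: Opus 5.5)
Your proof is correct. The paper gives no proof of its own: it only cites \cite{HANDQ} and remarks that $\sqcap$ corresponds to $\cap$ and ${}'$ to ${}^{\bot}$. Your argument is the standard two-way translation behind that citation: the sets in $\sigma$ form an orthomodular lattice under $\cap$ and ${}^{\bot}$, and conversely an orthomodular lattice embeds via $a\mapsto\{x\mid \mathbf 0\neq x\sqsubseteq a\}$ into the $\bot$-closed sets of the frame on $L\setminus\{\mathbf 0\}$ with $x\perp y$ iff $x\sqsubseteq y'$.

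Two cosmetic points only. First, the parenthetical label on your first part is reversed: that part shows validity in orthomodular lattices implies validity in OM-models. Second, you should note explicitly that $W$ is nonempty, as an O-frame requires; this holds because $v(A)\neq\mathbf 1$ forces $\mathbf 1\neq\mathbf 0$, so $\mathbf 1\in W$.
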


These theorems allow us to formulate OML using either 
orthomodular lattices or OM-models. 
As in the case of algebraic and Kripke models for other logics, the respective operators can also be straightforwardly associated ($\sqcap$ is associated with $\cap$ and $'$ is associated with $^{\bot}$) \cite{HANDQ}.

Next, we review a proof system for OML.
Since OML is well suited to {\it sequent calculus}, we adopt this formulation.
Among the various sequent calculi, 
{\bf GOM} \cite{nis1} is the most amenable to modification; therefore, we adopt it as the basis of this study.
A {\it sequent} is defined by $\GA \Rightarrow \DE$ where $\GA$ and $\DE$ are finite sets of formulas.
We define that sequent $\GA \Rightarrow \DE$ is {\it false} at $x \in W$ if all $A \in \GA$ are true at $x$
and all $A \in \DE$ are false at $x$.
If $\GA \Rightarrow \DE$ is not false at $x$, then it is {\it true} at $x$, and write it as $x \models \GA \Rightarrow \DE$.
If $\GA \Rightarrow \DE$ is true at all $x \in W$ in all OM-models, then $\GA \Rightarrow \DE$ is {\it valid} in OM-models.
In the following definition of {\bf GOM}, minor details have been altered from the original for convenience, but the essence remains unchanged.

\ 

\noi
{\bf GOM} \cite{nis1}


\noi
Axiom: 
\begin{center}
$\SEQ{A}{A}$ \qquad 
\end{center}
\noi
Rules: 
$$
\infer[\mbox{(cut)}]{\SEQ{\GA,\PI}{\DE,\SI}}{
  \SEQ{\GA}{\DE, A}
  &
  \SEQ{A, \PI}{\SI}
}
$$
$$
\infer[\mbox{(wL)}]{\SEQ{A,\GA}{\DE}}{
 \SEQ{\GA}{\DE}
}
\hspace{5mm}
\infer[\mbox{(wR)}]{\SEQ{\GA}{\DE,A}}{
 \SEQ{\GA}{\DE}
}
\hspace{5mm}
\infer[\mbox{($\sim$ L)}]{\SEQ{\sim\!A,\GA}{\DE}}{
 \SEQ{\GA}{\DE,A} 
}
\hspace{5mm}
\infer[\mbox{($\sim$ R)}]{\SEQ{\sim \DE}{\ \sim\!A}}
 {\SEQ{A}{\DE}}
$$
$$
\infer[\mbox{($\sim \sim$ L)}]{\SEQ{\sim \sim A,\GA}{\DE}}{
 \SEQ{A,\GA}{\DE} 
}
\hspace{5mm}
\infer[\mbox{($\sim \sim$ R)}]{\SEQ{\GA}{\DE,\sim \sim A}}
 {\SEQ{\GA}{\DE,A}}
$$
$$
\infer[\mbox{(\AND L)}]{\SEQ{A \AND B, \GA}{\DE}}
 {\SEQ{A, \GA}{\DE}}
\hspace{5mm}
\infer[\mbox{(\AND L)}]{\SEQ{A \AND B, \GA}{\DE}}
 {\SEQ{B, \GA}{\DE}}
\hspace{5mm}
\infer[\mbox{(\AND R)}]{\SEQ{\GA}{\DE, A \AND B}}{
 \SEQ{\GA}{\DE, A} & \SEQ{\GA}{\DE, B}
}
$$
$$
\infer[\mbox{(OM)}]{\SEQ{\sim\!A}{\ \sim\!B}}{
 \SEQ{\sim\!B}{\ \sim\!A} & \SEQ{\ \ \sim\!A , B}{}}
$$

\normalsize

Because of the restricted form of ($\sim$R), we need to add the rule of double-negation introduction.
The features of {\bf GOM} are ($\sim$R) and (OM); otherwise, it is largely the same as classical logic.
When $ \GA \Rightarrow \DE$ is derivable in the system 
{\bf S}, we write $\vdash_{\bf S} \GA \Rightarrow \DE$.

\begin{theorem}[\cite{nis1}]
$\vdash_{\bf GOM}  \GA \Rightarrow \DE$ iff $\GA \Rightarrow \DE$ is valid in OM-models.
\end{theorem}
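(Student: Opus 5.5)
The statement is a soundness and completeness theorem for \textbf{GOM}. I would prove the two directions separately: soundness by induction on derivations, completeness by a canonical-model construction.

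\emph{Soundness.} I would proceed by induction on the height of a derivation and show that every rule preserves truth at every point of every OM-model. The axiom $A \Rightarrow A$ and the rules (cut), (wL), (wR), ($\wedge$L), ($\wedge$R), ($\sim$L) and ($\sim\sim$L/R) are checked pointwise, exactly as in classical logic. For ($\sim$L) we only need irreflexivity: if $x\models \sim A$ and $x\models A$, then $x\perp x$, which is impossible. For ($\sim\sim$) we use that $V(A)$ is $\perp$-closed, by the earlier theorem on $\perp$-closedness, so $V(\sim\sim A)=V(A)$.

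The rules that are genuinely not pointwise are ($\sim$R) and (OM), so they need global arguments.
\begin{itemize}
\item For ($\sim$R): validity of $A\Rightarrow\Delta$ gives $V(A)\subseteq\bigcup_{D\in\Delta}V(D)$. Suppose $x\models\sim D$ for all $D\in\Delta$, and take any $y\in V(A)$. Then $y\in V(D)$ for some $D$, and by symmetry $y\perp x$. Hence $x\in V(A)^{\perp}$.
\item For (OM): the premises say $V(B)^{\perp}\subseteq V(A)^{\perp}$ and $V(A)^{\perp}\cap V(B)=\emptyset$. Put $X=V(B)$ and $Y=V(A)$, which are elements of $\sigma$. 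Then $X\subseteq Y^{\perp\perp}=Y$ and $Y^{\perp}\cap X=\emptyset$. I would apply the orthomodular law to $Y^{\perp}$ and $X^{\perp}$ and conclude $Y^{\perp}=X^{\perp}$, hence $V(\sim A)\subseteq V(\sim B)$.
\end{itemize}

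\emph{Completeness.} I would build a canonical OM-model. The plan is as follows.
\begin{enumerate}
\item Take $W$ to be the set of all formulas $A$ such that $\not\vdash A\Rightarrow$; these are the consistent formulas, which serve as ``states''.
\item Define $A\perp B$ iff $\vdash A,B\Rightarrow$. Symmetry is immediate. Irreflexivity holds because $\vdash A,A\Rightarrow$ is the same sequent as $\vdash A\Rightarrow$ (antecedents are sets), which is excluded for elements of $W$.
\item Set $\hat{A}=\{C\in W : \vdash C\Rightarrow A\}$ and $V(p)=\hat{p}$. Prove the truth lemma $V(A)=\hat{A}$ by induction on $A$:
\begin{itemize}
\item The $\wedge$ case uses ($\wedge$L), ($\wedge$R) and cut.
\item The $\sim$ case is the key step. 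If $\vdash C\Rightarrow\sim A$ and $\vdash D\Rightarrow A$, then cut with ($\sim$L) gives $\vdash C,D\Rightarrow$, so $C\perp D$. Conversely, if $C\perp D$ for all $D\in\hat{A}$, take $D=A$ itself (when $A$ is consistent) to get $\vdash C,A\Rightarrow$. Then ($\sim$R) together with ($\sim\sim$) yields $C\Rightarrow\sim A$. We derive $\sim\sim C\Rightarrow\sim A$ from $A\Rightarrow\sim C$ and then cut against $C\Rightarrow\sim\sim C$.
\end{itemize}
\item Let $\sigma=\{\hat{A}\}$. Show that each $\hat{A}$ is $\perp$-closed, so that $\hat A^{\perp\perp}=\hat{A}$ follows from the $\sim$ case. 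Show that $\sigma$ contains $W$ and $\emptyset$, using $p\wedge\sim p$ and its negation.
\item Show that the orthomodular law holds in $\sigma$. This should be derived from (OM): the identity $X\cap(X\cap(X\cap Y)^{\perp})^{\perp}\subseteq X\cap Y$ translates to the derivability of $A\wedge\sim(A\wedge\sim(A\wedge B))\Rightarrow A\wedge B$.
\end{enumerate}

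Finally, a sequent $\Gamma\Rightarrow\Delta$ that is not derivable should be refuted at the world $\bigwedge\Gamma$. The main obstacle lies here: $\Delta$ is a set, and sequents are not closed under disjunction on the right in the usual way. I would first try to reduce to the single-succedent form $\Gamma\Rightarrow\bigvee\Delta$. However, $C\Rightarrow A\vee B$ does not split as a disjunction of derivabilities, so refuting each $D\in\Delta$ separately is impossible in general. I would therefore follow Nishimura's approach in \cite{nis1}. The step I expect to be the hardest is proving the (OM) derivation in step 5 and handling multi-succedent sequents. For the latter I would prove that $\Gamma\Rightarrow\Delta$ is derivable iff $\bigwedge\Gamma\Rightarrow\bigvee\Delta$ is derivable, using ($\sim$L) and ($\sim$R) on the negated form of $\bigvee\Delta$. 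Since $V$ interprets $\bigvee\Delta$ as $(\bigcap_{D}V(D)^{\perp})^{\perp}$, which contains $\bigcup_{D}V(D)$, I would aim to show that the validity notion above coincides with the lattice one on the single-formula form.
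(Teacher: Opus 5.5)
\paragraph{Soundness.} Your soundness half is essentially fine. There is one slip in the (OM) case. From $X^{\perp}\subseteq Y^{\perp}$ you get $Y\subseteq X$, not $X\subseteq Y$; if you really had $X\subseteq Y$, (OM) would not need orthomodularity at all. Your actual application of the orthomodular law to $Y^{\perp},X^{\perp}$ uses only $X^{\perp}\subseteq Y^{\perp}$ and $Y^{\perp}\cap X=\emptyset$, and it is correct.

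\paragraph{The gap: the canonical orthogonality in step 2.} The genuine gap is in step 2 of the completeness half. Defining $A\perp B$ by $\vdash A,B\Rightarrow$ makes canonical orthogonality mean $a\sqcap b=\mathbf{0}$. In an orthomodular lattice this is much weaker than $a\sqsubseteq b'$: two distinct, non-orthogonal lines in a two-dimensional Hilbert space meet only in $\mathbf{0}$.

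The $\sim$ case of your truth lemma then needs the passage from $\vdash C,A\Rightarrow$ to $\vdash A\Rightarrow\sim C$, and this passage is not available. ($\sim$R) applies only when the antecedent is a single formula, and the passage is also semantically invalid. Concretely, let $C=q\wedge\sim(p\wedge q)$.
\begin{itemize}
\item $C,p\Rightarrow$ is derivable, from $p,q\Rightarrow p\wedge q$ by ($\sim$L) and ($\wedge$L).
\item Hence, by cut against any $D\Rightarrow p$, $C\in\hat{p}^{\perp}=V(\sim p)$ in your model.
\item But $C\Rightarrow\sim p$ is not valid: interpret $p,q$ as two distinct non-orthogonal lines, so that $V(C)=V(q)\not\subseteq V(p)^{\perp}$. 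By soundness it is not derivable, so $C\notin\widehat{\sim p}$.
\end{itemize}
So the truth lemma already fails at $\sim p$, and your model does not refute $C\Rightarrow\sim p$.

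The repair is to define $C\perp D$ iff $\vdash C\Rightarrow\sim D$.
\begin{itemize}
\item It is symmetric by ($\sim$R) and double negation.
\item It is irreflexive on consistent formulas by ($\sim$L) and (cut).
\item In the $\sim$ case, taking $D=A$ now yields $\vdash A\Rightarrow\sim C$ directly.
\end{itemize}
The paper does not prove this theorem itself; it cites Nishimura. Its own canonical model for \textbf{OMR}$^{\top}$, however, defines $\perp_c$ in this way: by a formula lying in one point and its negation $\top A$ in the other, not by joint inconsistency.

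\paragraph{Multi-succedent sequents.} Your plan here is also wrong, and in fact unnecessary. The proposed lemma that $\Gamma\Rightarrow\Delta$ is derivable iff $\bigwedge\Gamma\Rightarrow\bigvee\Delta$ is derivable is false:
\begin{itemize}
\item $\Rightarrow p\vee\sim p$ is derivable, from $\sim p\wedge\sim\sim p\Rightarrow$ by ($\sim$R) with empty succedent.
\item $\Rightarrow p,\sim p$ is not derivable: it fails at any point outside $V(p)\cup V(p)^{\perp}$, and you have proved soundness.
\end{itemize}
The right-hand comma is interpreted as a pointwise union, not as the lattice join, so the two validity notions do not coincide on the single-formula form.

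What you actually need is trivial. Suppose $\not\vdash\Gamma\Rightarrow\Delta$, and let $G=\bigwedge\Gamma$ (or a provable formula such as $\sim(p\wedge\sim p)$ if $\Gamma=\emptyset$). Then $G$ is consistent, and $\not\vdash G\Rightarrow D$ for every $D\in\Delta$; otherwise cut with $\Gamma\Rightarrow G$ followed by (wR) would give $\Gamma\Rightarrow\Delta$. With the corrected truth lemma, the world $G$ refutes $\Gamma\Rightarrow\Delta$.

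\paragraph{Step 5.} Once $\perp$ is fixed, step 5 is routine. Instantiate (OM) with $\sim X,\sim Y$ in its two positions, where $X=A\wedge\sim(A\wedge\sim(A\wedge B))$ and $Y=A\wedge B$. Its premises $\sim\sim Y\Rightarrow\sim\sim X$ and $\sim\sim X,\sim Y\Rightarrow$ are derivable without (OM).
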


\section{Restricted negation}\label{RN}

Orthomodular law is a central property of OML, but its
conceptual content is not always easy to grasp from its usual formal
definition. It has therefore been studied from various physical,
algebraic, and proof-theoretic viewpoints. In this study, we approach
orthomodular law from the perspective of compound logical connectives.
For instance, orthomodular law can be related to a modus-ponens-like
principle for the Sasaki hook: from $A$ and $\sim\!(A \wedge \sim\!(A \wedge B))$, one obtains
$B$ \cite{bat2,OBrunet,TK1}. 

Our aim is to give another such interpretation, based on a
connective that has received comparatively little attention: restricted
negation.
We write the restricted negation of $B$ by $A$ as $A/B$, and define its semantics by
$A/B = A \wedge\!\sim\!(A \wedge B)$.
Although the term ``negation'' is usually used for unary operators, we use it for this binary operator in view of its intended meaning.
The term ``restricted'' reflects its Hilbert-space interpretation. 
If $A$ is regarded as a closed subspace of a Hilbert space, then $A/B$
represents the orthogonal complement of $B$ (when $B$ is not contained in $A$, we instead consider $A \wedge B$, the part of $B$ lying in $A$)
taken inside the
subspace $A$, rather than in the whole space. 
More precisely, if $H_1$ and $H_2$ are closed subspaces of a Hilbert space $H$ and $H_2 \subseteq H_1$,
then the orthogonal complement of $H_2$ in $H_1$ is given by
$H_2^{\bot(H_1)} = H_1 \cap H_2^{\bot(H)} = H_1 \cap (H_1 \cap H_2)^{\bot(H)}$.
This viewpoint gives a simple reformulation of the orthomodular law.
Using restricted negation, the orthomodular law can be expressed as
$A/(A/B) \leq B$, or equivalently, $A/(A/B) = A \wedge B$.
Indeed, in the usual notation,
$A \wedge \sim\!\!(A \wedge \sim\!\!(A \wedge B)) = A \wedge \sim\!\!( A \wedge (A \wedge \sim\!\!(A \wedge B))) = A/(A/B)$.
Thus, restricted negation allows the orthomodular law to be read as a
form of a double negation principle, not in the whole space, but inside
the restricted subspace $A$.
In particular, when $A$ is the top element $\top$ which represents the whole space, restricted negation coincides with ordinary negation. In this case, the above identity reduces
to the familiar double-negation property
$B = \top \wedge B = \top /(\top /B) = \top \wedge  \sim (  \top \wedge (\top \wedge \sim (\top \wedge B))) = \sim \sim\!B$.
Therefore, orthomodular law can be understood as a generalization of
double negation for relative orthogonal complements.
From the perspective that OML may be viewed as orthologic (OL) supplemented with new notions arising from various interpretations of the orthomodular law, these notions are typically absent from OL itself. In the case of $A/(A/B) = A \wedge B$, however, the notion can be regarded as a generalization of orthogonal closure $\sim \sim\!B = B$, which is already present in OL.
For this reason, the extension using $/$ may be viewed as a more natural extension of OL.

Another distinctive feature of this restricted negation is its connection with dynamic notions.
As discussed in the Introduction, the Sasaki hook embodies a dynamic notion of projection. It can therefore be represented by a formula involving the modal connective $[A]B$.
With this notation, the orthomodular law can be expressed as $\sim [A] \sim [A] B \leq B$.
Under a modal-logical interpretation, this formula can be read as ``if  $x(A)y$ and $y \not\perp z$, then there exists $w \in W$ such that $z(A)w$ and $w \not\perp x$'' and it captures the self-adjointness of the projection operator.
Since restricted negation is, by definition, the negation of the Sasaki hook, $\sim [A] \sim [A] B \leq B$ corresponds to $A/(A/B)\leq B$ in terms of restricted negation.

In the lattice representation, we also introduce a new operator $b^{a}$ for restricted negation as $a \sqcap (a \sqcap b )'$.
Let $L_a = \{b \in L | b \sqsubseteq a\}$ and $b \sqcup_a c = (b^a \sqcap c^a)^a$.
Also in orthomodular lattices, as in the Hilbert space case, the following theorem provides evidence that restricted negation represents the orthogonal complement operation in abstract subspace $L_a$.

\begin{theorem}\label{subAlattice} Let $(L, \sqcap, \sqcup, ', \bf{1}, \bf{0})$ be an orthomodular lattice and let $a \in L$. Then, $(L_a, \sqcap, \sqcup_a, ^{a}, a, \bf{0})$ is an  orthomodular lattice.
\end{theorem}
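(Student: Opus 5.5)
The plan is to verify the ortholattice axioms for $(L_a,\sqcap,\sqcup_a,{}^a,a,\mathbf{0})$ one by one, and then check the orthomodular law inside $L_a$. Throughout, the key tool is the orthomodular law of $L$ in the form $a \sqcap (a \sqcap (a\sqcap b)')' = a \sqcap b$. In the present notation this reads $(b^a)^a = a \sqcap b$.

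First, the basic closure facts.
\begin{itemize}
\item $L_a$ is closed under $\sqcap$, and ${}^a$ maps $L$ into $L_a$ because $b^a = a\sqcap(\cdots)\sqsubseteq a$. Hence $\sqcup_a$ is also an operation on $L_a$.
\item $a$ is the top of $L_a$ and $\mathbf{0}$ is its bottom.
\item Since $\sqcap$ on $L_a$ is the restriction of the meet of $L$, it is the meet in $L_a$.
\end{itemize}

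Then the ortholattice conditions for $b,c\in L_a$.
\begin{itemize}
\item \emph{Involution.} Since $a\sqcap b=b$, orthomodularity gives $(b^a)^a = a\sqcap b = b$. This is the only place orthomodularity of $L$ is really needed.
\item \emph{Antitonicity.} If $b\sqsubseteq c$, then $c'\sqsubseteq b'$ in $L$, so $c^a = a\sqcap c'\sqsubseteq a\sqcap b' = b^a$. Note that for $b\in L_a$ we have $b^a = a\sqcap b'$.
\item \emph{Contradiction law.} $b\sqcap b^a = b\sqcap a\sqcap b' = \mathbf{0}$.
\item \emph{Uniqueness of the complement.} Here I read condition 1 as ${}^a$ being a well-defined function, which is immediate.
\end{itemize}

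An involutive antitone map is an order-reversing bijection of $L_a$. Therefore it turns meets into joins, so $(b^a\sqcap c^a)^a$ is the least upper bound of $b,c$ in $L_a$. Thus $\sqcup_a$ is the lattice join of $L_a$, and the de Morgan condition 4 holds by definition together with involution. Finally, $b\sqcup_a b^a = (b^a\sqcap b)^a = \mathbf{0}^a = a\sqcap \mathbf{0}' = a$, which gives condition 6.

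It remains to verify the orthomodular law in $L_a$: for $b,c\in L_a$,
\[ b\sqcap(b\sqcap(b\sqcap c)^a)^a = b\sqcap c. \]
I will compute directly, using $d^a = a\sqcap d'$ for $d\sqsubseteq a$ and $b\sqsubseteq a$. This gives $b\sqcap(b\sqcap c)^a = b\sqcap a\sqcap(b\sqcap c)' = b\sqcap(b\sqcap c)'$, which lies in $L_a$. Applying ${}^a$ and meeting with $b$ yields $b\sqcap a\sqcap(b\sqcap(b\sqcap c)')' = b\sqcap(b\sqcap(b\sqcap c)')'$. By the orthomodular law of $L$ applied to $b,c$, this equals $b\sqcap c$.

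The main obstacle I expect is only bookkeeping: the absorption $a\sqcap b = b$ is used repeatedly, and one must make sure that $\sqcup_a$ is genuinely the join. That rests on the standard fact that an order anti-automorphism exchanges meets and joins, and I will state this explicitly rather than compute.
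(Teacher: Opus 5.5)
Your proof is correct and follows essentially the same route as the paper. Both check the ortholattice conditions one by one, using the orthomodularity of $L$ for the involution $b^{aa}=b$ and again, after absorbing $a$, for the orthomodular law in $L_a$. Your simplification $d^a=a\sqcap d'$ for $d\sqsubseteq a$ makes the bookkeeping cleaner. Your order-anti-automorphism argument that $\sqcup_a$ is the actual join of $L_a$ fills in a point the paper simply takes as holding by definition.
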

\begin{proof}
\begin{enumerate}[align=left, leftmargin=*]

\item This follows directly from the definition.

\item $b^{aa} = a \sqcap (a \sqcap (a \sqcap (a \sqcap b)'))' = a \sqcap (a \sqcap (a \sqcap b)')' = a \sqcap b = b$.

\item $b \sqsubseteq c \Rightarrow a \sqcap b \sqsubseteq a \sqcap c \Rightarrow (a \sqcap c)' \sqsubseteq (a \sqcap b)' \Rightarrow  a \sqcap (a \sqcap c)' \sqsubseteq  a \sqcap (a \sqcap b)' \Rightarrow c^{a} \sqsubseteq b^{a}$.

\item From the definition, $b \sqcup_a c = (b^a \sqcap c^a)^a$.

\item $b \sqcap b^{a} = b \sqcap (a \sqcap (a \sqcap b)') = (b \sqcap a) \sqcap (a \sqcap b)' = \bf{0}$.
 
\item $b \sqcup_a b^{a} = a \sqcap ( a \sqcap   ( a \sqcap (a \sqcap b)' \sqcap  b^{aa}))' = a \sqcap ( a \sqcap   ( a \sqcap (a \sqcap b)' \sqcap b ))' = a \sqcap ( a \sqcap {\bf 0})' = a \sqcap {\bf 1} = a$.

\item[(Orthomodularity)]\ $b \sqcap (b \sqcap (b \sqcap c)^{a})^{a} = b \sqcap a \sqcap ( a \sqcap b \sqcap  a \sqcap (a \sqcap b \sqcap c)')' $

$= (b \sqcap a) \sqcap ( (b \sqcap  a) \sqcap ((b \sqcap a) \sqcap c)')' = b \sqcap ( b  \sqcap (b \sqcap c)')' = b \sqcap c$.

\end{enumerate}
\end{proof}

Another advantage of restricted negation is that, when combined with $\top$, it allows all existing connectives $\sim$ and $\wedge$ (and so $\vee$ and $\to_S$) to be expressed in terms of this single binary connective, as shown below:

\begin{description}

\item $\sim\!A = \top \wedge \sim (\top \wedge A) = \top/A$

\item $A \wedge B = A/(A/B)$

\end{description}

On the basis of these considerations, we define a new language for OML.
\begin{definition}[Language $\mathcal{L}_2$ and formulas of OML]
\end{definition}
The language $\mathcal{L}_2$  of OML consists of a countable infinite set of propositional variables $p, q, r,\ldots$, constant symbol $\top$, and the logical connective $/$.
Then, formulas of OML (for $\mathcal{L}_2$) are defined as follows:

\begin{description}

\item $A :: =  \top \ |\  p \ | \ A / A $

\end{description}

This binary connective $/$ is assumed to be right-associative.
Furthermore, since there is only one binary connective, we abbreviate $A/B$ as $AB$.
For example, $A/(B/(C/D)) $ is abbreviated as $ABCD$.
Hereafter, for the purpose of proving completeness of the sequent calculi, we develop the argument based on OM-models rather than the lattices.
As in the previous section, we define a valuation $V$ as a function from the set of propositional variables to $\sigma$. Then, we extend the definition of $V$ inductively as follows:

\begin{description}


\item $V(\top) = W$

\item $V(AB) =  V(A) \cap (V(A) \cap V(B))^{\bot}$ 

\end{description}

The notions of truth and validity are defined in the same way as in the previous section.
In terms of the framework of OM-models, the correspondences mentioned above are given as follows:


\begin{lemma}
For all OM-models,
$V(\top B) = V(B)^{\perp}$, and  $V(AAB) = V(A) \cap V(B) $  

\end{lemma}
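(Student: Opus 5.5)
Both identities follow by unfolding the semantic clause $V(AB) = V(A) \cap (V(A) \cap V(B))^{\bot}$, keeping the right-associative convention in mind: $AAB$ is read as $A/(A/B)$, so its semantics must be unfolded twice.

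For the first identity we set $A = \top$ and use $V(\top) = W$. This gives $V(\top B) = W \cap (W \cap V(B))^{\bot} = V(B)^{\bot}$, because $X \subseteq W$ for every subset $X$, so $W \cap X = X$.

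For the second identity we write $X = V(A)$ and $Y = V(B)$; both belong to $\sigma$ by the definition of a valuation together with the closure of $\sigma$ under $\cap$ and ${}^{\bot}$. (Formally, the fact that every formula of $\mathcal{L}_2$ denotes an element of $\sigma$ is a straightforward induction, using $W \in \sigma$ for $\top$.) Unfolding once gives $V(AB) = X \cap (X \cap Y)^{\bot}$. Unfolding again gives
\[
V(AAB) = X \cap \bigl(X \cap (X \cap (X \cap Y)^{\bot})\bigr)^{\bot}.
\]
By associativity and idempotence of intersection, $X \cap (X \cap Z) = X \cap Z$ for any $Z$. So the inner expression collapses and $V(AAB) = X \cap (X \cap (X \cap Y)^{\bot})^{\bot}$. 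This is exactly the left-hand side of the orthomodular law imposed on $\sigma$ in the definition of an OM-model. Since $X, Y \in \sigma$, that law applies and yields $X \cap Y = V(A) \cap V(B)$.

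The only point needing care is to ensure $V(A), V(B) \in \sigma$, so that the orthomodular condition, which is stated only for members of $\sigma$, may be invoked. This is the one non-routine step: it is settled by the induction on $\mathcal{L}_2$-formulas noted above. Everything else is set-theoretic simplification.
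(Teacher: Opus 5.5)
Your proof is correct and follows the paper's argument. Unfold the clause for $/$ (twice for $AAB$), collapse $V(A)\cap(V(A)\cap Z)$ to $V(A)\cap Z$, and invoke the orthomodular law imposed on $\sigma$; your explicit check that every $V(C)$ lies in $\sigma$ is a detail the paper leaves implicit, and your unfolding also avoids the slip in the paper's displayed chain, where $V(AB)$ appears in place of $V(B)$ in the second expression.
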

\begin{proof}
$V(\top B) = V(\top) \cap ( V(\top) \cap V(B) )^{\perp} = W \cap ( W \cap V(B) )^{\perp} = V(B)^{\perp}$.

$V(AAB) = V(A) \cap (V(A) \cap V(AB))^{\perp} = V(A) \cap (V(A) \cap (V(A) \cap (V(A) \cap V(AB))^{\perp} ))^{\perp} \\
= V(A) \cap (V(A) \cap (V(A) \cap V(B))^{\perp})^{\perp} = V(A) \cap V(B)$. 

The final equality follows from orthomodularity of $\sigma$.
\end{proof}


As in general approaches that analyze logics using only propositional variables together with $\to$ and $\bot$, the present study also requires the constant $\top$, since propositional variables and $/$ alone are insufficient.
Specifically, we have the following theorem.

\begin{theorem}
There exists no valid formula consisting solely of propositional variables and $/$.
\end{theorem}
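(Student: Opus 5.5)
Our plan is to exhibit a single OM-model, together with a single valuation, under which every formula built only from propositional variables and $/$ denotes the empty set. Since $W$ is nonempty, such a formula cannot satisfy $V(A) = W$, and so it is not valid.

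The model is the trivial one. Take any O-frame $(W,\perp)$ with $W$ nonempty; a single point $w$ with $\perp$ empty already suffices, since irreflexivity and symmetry hold vacuously. Let $\sigma = \{\emptyset, W\}$. We first check that this is an admissible OM-model.
\begin{itemize}
\item Both members of $\sigma$ are $\perp$-closed: we have $\emptyset^{\perp} = W$, and, because no $x$ satisfies $x \perp w$, also $W^{\perp} = \emptyset$.
\item The family $\sigma$ is closed under $\cap$ and under $^{\perp}$.
\item The orthomodular law $X \cap (X \cap (X \cap Y)^{\perp})^{\perp} = X \cap Y$ can be verified by inspecting the four choices of $X, Y \in \{\emptyset, W\}$.
\end{itemize}
Alternatively, we could take any OM-model at all and argue directly with the valuation below, because only the equation $\emptyset \cap \cdots = \emptyset$ will be used. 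Even so, fixing a concrete model keeps the statement clearly witnessed. We set $V(p) = \emptyset$ for every propositional variable $p$.

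We then prove, by induction on formulas $A$ that contain no $\top$, that $V(A) = \emptyset$.
\begin{itemize}
\item In the base case $A = p$, this holds by the choice of $V$.
\item In the inductive case $A = BC$, we have $V(BC) = V(B) \cap (V(B) \cap V(C))^{\perp}$. By the induction hypothesis $V(B) = \emptyset$, so $V(BC) = \emptyset$.
\end{itemize}
The induction hypothesis is needed only for the left argument $B$. The key observation is simply that $V(BC) \subseteq V(B)$, so the truth set of any such formula lies inside the truth set of its leftmost propositional variable. It follows that $V(A) = \emptyset \neq W$, so $A$ is not valid in OM-models; by the earlier theorem equating OM-model validity with lattice validity, it is not valid in orthomodular lattices either.

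There is no serious obstacle here. The only points requiring care are confirming that the chosen $\sigma$ satisfies the OM-model conditions, which is trivial for $\{\emptyset, W\}$, and stating the induction so that it follows the right-associative parsing convention. Under that convention, the leftmost symbol of $ABCD$ is $A$, and the semantics of $/$ always intersects with the left argument. We could also remark that the argument shows more: every formula without $\top$ is bounded above by its leftmost variable, so no such formula can be valid. This explains why the constant $\top$ is indispensable.
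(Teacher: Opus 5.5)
Your proof is correct and is essentially the paper's argument: both rest on $V(A/B) \subseteq V(A)$, which reduces the question to the leftmost propositional variable of the formula. The only difference is that you fix one explicit countermodel (a single point, $\sigma=\{\emptyset,W\}$, $V(p)=\emptyset$ for all $p$) that makes every $\top$-free formula empty at once, whereas the paper iterates ``if $A/B$ is valid then $A$ is valid'' down to a variable and simply asserts that a propositional variable is not valid.
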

\begin{proof}
If $A/B$ is valid, then $A$ must be valid because $A/B = A \wedge\!\sim\!(A \wedge B)$.
Applying this to $A$ and iterating the same argument, we would eventually obtain that a single propositional variable is valid. However, this is not the case.
\end{proof}



\section{Sequent calculi with restricted negation}\label{SCRN}

We now construct sequent calculi of OML based on $\mathcal{L}_2$.
We begin by introducing a basic system {\bf OMR$^{-}$}. While {\bf OMR$^{-}$} is sound, its completeness is currently unknown.
Subsequently, we introduce two extensions of {\bf OMR$^{-}$}, namely {\bf OMR$^{S}$} and {\bf OMR$^{\top}$}. Both satisfy completeness and exhibit distinct characteristics.

\ 

\noi
{\bf OMR$^{-}$} 

\noi
Axioms: 
\begin{center}
$\SEQ{A}{A}$ \hspace{1cm} $\SEQ{}{\top}$ \qquad 
\end{center}

\noi
Rules: 
$$
\infer[\mbox{(cut)}]{\SEQ{\GA,\PI}{\DE,\SI}}{
  \SEQ{\GA}{\DE, A}
  &
  \SEQ{A, \PI}{\SI}
}
$$
$$
\infer[\mbox{(wL)}]{\SEQ{A,\GA}{\DE}}{
 \SEQ{\GA}{\DE}
}
\hspace{5mm}
\infer[\mbox{(wR)}]{\SEQ{\GA}{\DE,A}}{
 \SEQ{\GA}{\DE}
}
$$
$$
\infer[\mbox{(\AND L1)}]{\SEQ{AAB, \GA}{\DE}}
 {\SEQ{A, \GA}{\DE}}
\hspace{5mm}
\infer[\mbox{(\AND L2)}]{\SEQ{AAB, \GA}{\DE}}
 {\SEQ{B, \GA}{\DE}}
$$
$$
\infer[\mbox{(\AND R)}]{\SEQ{\GA}{\DE, AAB}}{
 \SEQ{\GA}{\DE, A} & \SEQ{\GA}{\DE, B}
}
$$
$$
\infer[\mbox{(/L)}]{\SEQ{AC,B}{}}{
 \SEQ{B}{C} 
}
\hspace{5mm}
\infer[\mbox{(/R)}]{\SEQ{AC}{AB}}
 {\SEQ{B}{C}}
$$


One may also regard (/L) and (/R) as rules derived from the rules for the ordinary negation $\sim$, where $\sim$ is replaced by the restricted negation $A/$.
As will be shown later, in terms of completeness, (/L) serves only to make the sequent $A, \top A \Rightarrow$ derivable. Hence, the proof-theoretic strength remains unchanged even if the simpler sequent $A, \top A \Rightarrow$ is added as an axiom instead of (/L). We nevertheless adopt (/L)  here for comparison with {\bf GOM}.


\begin{theorem}[Soundness theorem for {\bf OMR$^{-}$}]\label{soundOMR-} If $\GA \Rightarrow \DE$ is provable in {\bf OMR$^{-}$}, then 
$\GA \Rightarrow \DE$ is valid.
\end{theorem}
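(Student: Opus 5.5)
We argue by induction on the height of a derivation in {\bf OMR$^{-}$}, showing that each axiom is valid in OM-models and that each rule preserves validity. Validity of $\Gamma \Rightarrow \Delta$ means that no $x \in W$ in any OM-model makes all of $\Gamma$ true and all of $\Delta$ false. The axioms are immediate. $A \Rightarrow A$ is trivially valid, and $\Rightarrow \top$ is valid because $V(\top)=W$. The structural rules (cut), (wL) and (wR) are handled exactly as in classical sequent calculus, since they involve only pointwise reasoning about a single world $x$.

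For the conjunction-like rules we use the Lemma stated above: $V(AAB) = V(A)\cap V(B)$, which relies on orthomodularity of $\sigma$. With this identity, (\AND L1) and (\AND L2) are sound because $x \models AAB$ implies $x \models A$ and $x \models B$. (\AND R) is sound because if $x$ falsifies $\Gamma \Rightarrow \Delta, AAB$, then $x \notin V(A)\cap V(B)$. Hence $x$ falsifies one of the premises $\Gamma\Rightarrow\Delta,A$ or $\Gamma\Rightarrow\Delta,B$.

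The genuinely new rules are (/L) and (/R), and these are where the orthogonality structure is used. For these rules a premise $B \Rightarrow C$ being valid means $V(B)\subseteq V(C)$.
\begin{itemize}
\item For (/L) we must show $V(AC)\cap V(B)=\emptyset$. Take $x \in V(AC)=V(A)\cap(V(A)\cap V(C))^{\bot}$ with $x\in V(B)\subseteq V(C)$. Then $x\in V(A)\cap V(C)$. Since $x$ is also orthogonal to every element of $V(A)\cap V(C)$, we get $x\perp x$, which contradicts irreflexivity.
\item For (/R) we must show $V(AC)\subseteq V(AB)$. From $V(B)\subseteq V(C)$ we get $V(A)\cap V(B)\subseteq V(A)\cap V(C)$. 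The operation ${}^{\bot}$ is antitone, since it is a universal condition over the set, so $(V(A)\cap V(C))^{\bot}\subseteq (V(A)\cap V(B))^{\bot}$. Intersecting both sides with $V(A)$ gives the claim.
\end{itemize}

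We expect no serious obstacle. The only point requiring care is that the semantics must be well defined, i.e.\ $V(A)\in\sigma$ for every formula $A$. This follows by induction on formulas, using that $\sigma$ contains $W$ and is closed under $\cap$ and ${}^{\bot}$. It is needed so that the orthomodular law of $\sigma$, invoked through the Lemma for the \AND-rules, applies to $V(A)$ and $V(B)$. The rules (/L) and (/R) use only irreflexivity and antitonicity of ${}^{\bot}$, not orthomodularity.
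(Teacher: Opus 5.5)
Your proposal is correct and follows essentially the same route as the paper. Both argue by induction on the last rule, reduce the $\wedge$-rules to $V(AAB)=V(A)\cap V(B)$, handle (/L) by observing that $V(A)\cap V(C)$ meets its own orthocomplement trivially (which you justify explicitly via irreflexivity), and handle (/R) by antitonicity of ${}^{\bot}$; your remark that $V(A)\in\sigma$ is needed to invoke the orthomodular law is a detail the paper leaves implicit.
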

\begin{proof}
The proof proceeds by induction on the inference rule used in the last step.
The proof for (cut), (wL), and (wR) are analogous to those in the classical case.
The proof for ($\wedge$L1), ($\wedge$L2), and ($\wedge$R) are also analogous to those in the classical case because $AAB = A \wedge B$.

\begin{description}

\item (/L)
Suppose that $B \Rightarrow C$ is valid.
In all OM-models, $V(B) \subseteq V(C)$. Then, $V(A) \cap V(B) \subseteq V(A) \cap V(C)$,
$(V(A) \cap V(B)) \cap (V(A) \cap V(C))^{\bot}  = \emptyset$, and $(V(A) \cap (V(A) \cap V(C))^{\bot}) \cap V(B) = V(AC) \cap V(B) = \emptyset$.
Therefore, $AC, B \Rightarrow$ is valid.

\item (/R)
Suppose that $B \Rightarrow C$ is valid.
In all OM-models, $V(B) \subseteq V(C)$. Then, $V(A) \cap V(B) \subseteq V(A) \cap V(C)$,
$(V(A) \cap V(C))^{\bot} \subseteq (V(A) \cap V(B))^{\bot}$, and $V(A) \cap (V(A) \cap V(C))^{\bot} \subseteq V(A) \cap (V(A) \cap V(B))^{\bot}$.
Therefore, $AC \Rightarrow AB$ is valid.
\end{description}
\end{proof}

We define {\bf OMR$^{S}$} as the system obtained by adding the following axioms 
(S1) and (S2) to {\bf OMR$^{-}$}.
(Here, $S$ stands for ``semantic equivalence axiom''.)

\begin{center}
$AB \Rightarrow AA\top AAB$ \ (S1)  \hspace{1cm} $AA \top AAB \Rightarrow AB$ \ (S2)  \qquad 
\end{center}

Intuitively, (S1) and (S2) correspond to expressing the meaning of $AB$ in terms of $V(AAB) = V(A) \cap V(B)$ and $V(\top B) = V(B)^{\bot}$.
In other words, this can be seen as a ``translation'' of 
$AB$ in 
$\mathcal{L}_2$ into $\mathcal{L}_1$ using $\wedge$ and $\sim$.
Although some rules of {\bf GOM} and those of {\bf OMR$^{S}$} are semantically similar,
there are essential differences between the two systems. By removing the
orthomodularity rule (OM) from {\bf GOM}, one obtains a proof system for the OL. In this sense, {\bf GOM} treats
orthomodularity as an additional structural principle imposed on top of OL.
{\bf OMR$^{S}$} takes a different approach. It has no rule that directly corresponds
to (OM); instead, properties related to orthomodularity are internalized
into the rules for restricted negation. For example, the rules
$(\wedge$ L1), $(\wedge$ L2), and $(\wedge$ R) reflect the property
$AAB = A\wedge B$, which holds in OML but not in OL. 
Moreover, ($\sim \sim$L) and ($\sim \sim$R), which had to be additionally introduced in {\bf GOM}, are also incorporated within $(\wedge$ L1), $(\wedge$ L2), and $(\wedge$ R) because $\top \top A = A$.
Thus, in {\bf OMR$^{S}$}, the
behavior of conjunction, the effect of orthomodularity, and the concept of orthogonal closure are not treated as separate proof-theoretic components, but are combined through the
single connective of restricted negation. This is one of the main
advantages of {\bf OMR$^{S}$}.

This feature also gives {\bf OMR$^{S}$} a more uniform syntactic and proof-theoretic
structure. Since every compound formula is generated from the same
primitive connective, definitions and arguments by induction on formulas
involve fewer primitive cases. Semantic clauses, translations, and proof
transformations can therefore be formulated in a more systematic way.
Moreover, the design of the proof system can be concentrated on the
behavior of restricted negation alone, rather than distributing the
analysis among several independent connectives. In this respect, {\bf OMR$^{S}$}
does not merely reformulate OML with fewer symbols; it organizes the theory around a single connective that simultaneously captures
logical composition and the orthomodular structure.


\begin{theorem}[Soundness theorem for {\bf OMR$^{S}$}]\label{soundS} If $\GA \Rightarrow \DE$ is provable in {\bf OMR$^{S}$}, then 
$\GA \Rightarrow \DE$ is valid.
\end{theorem}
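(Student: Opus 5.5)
Since {\bf OMR$^{S}$} is {\bf OMR$^{-}$} together with the two axioms (S1) and (S2), we argue by induction on the derivation, exactly as in the proof of Theorem \ref{soundOMR-}. All the rules of {\bf OMR$^{S}$} are those of {\bf OMR$^{-}$}, and Theorem \ref{soundOMR-} already shows that each of them preserves validity. The induction step is therefore unchanged. The axioms $A \Rightarrow A$ and $\Rightarrow \top$ are valid as before; the latter holds because $V(\top)=W$. So the only new work is to show that (S1) and (S2) are valid in every OM-model. It suffices to prove the single identity $V(AA\top AAB) = V(AB)$.

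We compute this identity using the earlier Lemma, namely $V(\top C)=V(C)^{\bot}$ and $V(CCD)=V(C)\cap V(D)$. Because $/$ is right-associative, $AA\top AAB$ parses as $A/(A/(\top/(A/(A/B))))$. This has the shape $CCD$ with $C=A$ and $D=\top AAB$. Hence
\[
V(AA\top AAB)=V(A)\cap V(\top AAB)=V(A)\cap V(AAB)^{\bot}=V(A)\cap (V(A)\cap V(B))^{\bot}.
\]
Here the second step applies the Lemma to $\top(AAB)$, and the third applies it again to $AAB$. By the semantic clause for $/$, the right-hand side is exactly $V(AB)$. The two sets are therefore equal, so both $AB\Rightarrow AA\top AAB$ and $AA\top AAB\Rightarrow AB$ are true at every world of every OM-model.

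The only point that needs care is the parsing of the right-associative notation, so that the Lemma is applied to the correct subformulas. The real content, namely orthomodularity, is already packaged in the Lemma's clause $V(AAB)=V(A)\cap V(B)$, so no further obstacle arises. Combining the validity of the axioms with the rule cases from Theorem \ref{soundOMR-} completes the induction.
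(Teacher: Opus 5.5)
Your proposal is correct and follows essentially the same route as the paper. Like the paper, you reduce soundness to checking (S1) and (S2) through the identity $V(AA\top AAB)=V(A)\cap(V(A)\cap V(B))^{\bot}=V(AB)$, using the Lemma's clauses $V(\top C)=V(C)^{\bot}$ and $V(CCD)=V(C)\cap V(D)$, and you inherit the rule cases from Theorem \ref{soundOMR-}; the only difference is that you spell out the right-associative parsing the paper calls ``clear''.
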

\begin{proof}
Validity of (S1) and (S2) is clear from the definition of $V(AB)$, $V(AAB) = V(A) \cap V(B)$, and $V(\top B) = V(B)^{\bot}$.
Therefore, from Theorem \ref{soundOMR-},  {\bf OMR$^{S}$} is also sound.
\end{proof}

A natural approach to proving the completeness of {\bf OMR$^{S}$} would be to exploit the completeness of {\bf GOM}. Nevertheless, despite the presence of translation-like axioms (S1) and (S2), the differences in language and the significant divergence in the form of the axioms make this approach infeasible at present.
(For example, $A \wedge B$ is a combination of two formulas, whereas $AAB$ is a combination of three; therefore, it is difficult to apply the same structural induction to both $A \wedge B$ and $AAB$.)
From a proof-theoretic standpoint, establishing the result in a system formulated solely in terms of restricted negation is preferable. 
Therefore, it is desirable to provide an independent proof based on the construction of a canonical model.
However, this construction is difficult when working with (S1) and (S2).
Our approach is to develop another system {\bf OMR$^{\top}$}, prove its completeness, and through this, establish the completeness of {\bf OMR$^{S}$}.
We define {\bf OMR$^{\top}$} as the system obtained by adding the following rules to {\bf OMR$^{-}$}.

$$
\infer[\mbox{($\top$1)}]{\SEQ{A, \top C}{AB}}{
 \SEQ{A, B}{C} 
}
\hspace{5mm}
\infer[\mbox{($\top$2)}]{\SEQ{\GA}{\top AB}}
 {\SEQ{\GA}{AAB}}
\hspace{5mm}
\infer[\mbox{($\top$3)}]{\SEQ{AB, \GA}{\DE}}{
 \SEQ{A,\GA}{\DE} 
}
$$

These rules are not particularly well-structured in form and are introduced as tools for establishing completeness; however, each of them can also be given a semantic interpretation.
($\top$1) is similar to (/R), making use of the fact that, on the left-hand side of a sequent, $\{A, B\}$ is interpreted as $A \wedge B$.
($\top$2) can be viewed as analogous to the rule for $\vee$ because $\top AB$ is equivalent to Sasaki hook $\top AB = \sim\!(A \wedge \sim\!(A \wedge B)) = \sim\!A \vee (A \wedge B)$. 
($\top$3) can be viewed as a variant of the rule for $\wedge$ because $AB = A \wedge \sim\!(A \wedge B)$.


\begin{lemma}\label{=Stop}
 $\vdash_{\bf OMR^{S}} \GA \Rightarrow \DE$ iff \ $\vdash_{\bf OMR^{\top}} \GA \Rightarrow \DE$.
\end{lemma}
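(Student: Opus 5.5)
The plan is to show that each system derives the extra axioms or rules of the other. Both systems contain {\bf OMR$^{-}$}, so it suffices to derive (S1) and (S2) in {\bf OMR$^{\top}$}, and ($\top$1), ($\top$2), ($\top$3) in {\bf OMR$^{S}$}. A derivation in one system can then be translated step by step into the other. We use two small facts that hold in {\bf OMR$^{-}$}. First, $X \Rightarrow \top\top X$ is derivable by ($\wedge$R) from $\Rightarrow \top$ (weakened) and $X \Rightarrow X$. Second, $AAB \Rightarrow A$ and $AAB \Rightarrow B$ are derivable by ($\wedge$L1) and ($\wedge$L2). Sequents are built from sets, so duplicated formulas produced by cuts merge automatically.

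\emph{{\bf OMR$^{S}$} derives the $\top$-rules.}
For ($\top$3), apply ($\wedge$L1) to $A,\GA \Rightarrow \DE$ to obtain $AA\top AAB,\GA \Rightarrow \DE$. Cutting with (S1), $AB \Rightarrow AA\top AAB$, gives $AB,\GA\Rightarrow\DE$.

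For ($\top$2), it suffices to derive $AAB \Rightarrow \top AB$ and cut it with the premise $\GA \Rightarrow AAB$. First, (S1) and $AA\top AAB \Rightarrow \top AAB$ (by ($\wedge$L2)) give $AB \Rightarrow \top AAB$ by cut. Applying (/R) with outer formula $\top$ yields $\top\top AAB \Rightarrow \top AB$. Cutting with $AAB \Rightarrow \top\top AAB$ gives $AAB \Rightarrow \top AB$.

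For ($\top$1), start from $A,B \Rightarrow C$. Cutting twice against $AAB\Rightarrow A$ and $AAB \Rightarrow B$ gives $AAB \Rightarrow C$. Then (/R) with $\top$ gives $\top C \Rightarrow \top AAB$, and weakening gives $A,\top C \Rightarrow \top AAB$. Combining this by ($\wedge$R) with $A,\top C\Rightarrow A$ yields $A,\top C \Rightarrow AA\top AAB$. Cutting with (S2) gives $A,\top C \Rightarrow AB$.

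\emph{{\bf OMR$^{\top}$} derives (S1) and (S2).}
For (S2), note that $A,B \Rightarrow AAB$ follows by ($\wedge$R) from weakened identities. ($\top$1) with $C := AAB$ then gives $A, \top AAB \Rightarrow AB$. We then apply ($\wedge$L1) to replace $A$ by $AA\top AAB$, and ($\wedge$L2) to replace $\top AAB$ by $AA\top AAB$. The two occurrences collapse, giving $AA\top AAB \Rightarrow AB$.

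For (S1), we use ($\wedge$R), so we need $AB\Rightarrow A$ and $AB \Rightarrow \top AAB$. The first follows by ($\top$3) from $A\Rightarrow A$. For the second, ($\top$2) applied to $AAB \Rightarrow AAB$ gives $AAB \Rightarrow \top AB$. Then (/R) with $\top$ gives $\top\top AB \Rightarrow \top AAB$, and cutting with $AB \Rightarrow \top\top AB$ gives $AB \Rightarrow \top AAB$.

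The main obstacle is finding the right instances for the "translation" steps. These are $AAB \Rightarrow \top AB$ in one direction and $AB\Rightarrow \top AAB$ in the other. In both, the trick is to move a formula across the sequent arrow using (/R) with outer formula $\top$, and then remove the resulting double prefix $\top\top$ via the ($\wedge$) rules. Once these two sequents are available, everything else is routine bookkeeping with cut, weakening, and the ($\wedge$) rules.
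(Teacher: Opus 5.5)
Your proposal is correct and follows essentially the same route as the paper. Both derive (S1) and (S2) in \textbf{OMR}$^{\top}$ and the three $\top$-rules in \textbf{OMR}$^{S}$, using the same key sequents $AAB \Rightarrow \top AB$ and $AB \Rightarrow \top AAB$, each obtained from (/R) with outer formula $\top$ together with $X \Rightarrow \top\top X$. The remaining differences are cosmetic: for ($\top$3) you apply ($\wedge$L1) to the premise before cutting with (S1), and for ($\top$1) you cut against $AAB \Rightarrow A$ and $AAB \Rightarrow B$ instead of applying ($\wedge$L1) and ($\wedge$L2) directly.
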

\begin{proof}
The proof proceeds by showing that any axiom occurring only in {\bf OMR$^{S}$} can be constructed using the rules in {\bf OMR$^{\top}$} and vice versa.
\begin{description}
\item (S1)

$$
\infer[\mbox{(cut)}]{\SEQ{A B}{\top  A A B}}
{
    \infer[\mbox{($\wedge$R)}]{\SEQ{A B}{\top   \top  A B}}    
    {
    {\SEQ{A B}{A B}}    
    {   
    }
    &
    \infer[\mbox{(wL)}]{\SEQ{A B}{\top}}    
    {
    {\SEQ{}{\top}}    
    }  
    }
    &
    \infer[\mbox{(/R)}]{\SEQ{\top   \top  A B}{\top  A A B}}    
    {
    \infer[\mbox{$(\top 2)$}]{\SEQ{A A B}{\top  A B}}    
    {
    {\SEQ{A A B}{A A B}}   
    }
    }
}
$$
$$
\infer[\mbox{$(\wedge R)$}]{\SEQ{A B}{A A  \top  A A B}}
{
    \infer[\mbox{$(\top 3)$}]{\SEQ{A B}{A}}    
    {
    {\SEQ{A}{A}}     
    }
    &
    \infer{\SEQ{A B}{\top  A A B}}    
    {
    \vdots
    }
}
$$

\item (S2)
$$
\infer[\mbox{($\wedge$L1)}]{\SEQ{A A  \top  A A B}{A B}}
{
    \infer[\mbox{($\wedge$L2)}]{\SEQ{A, A A  \top  A A B}{A B}}    
    {
    \infer[\mbox{$(\top 1)$}]{\SEQ{A, \top  A A B}{A B}}    
    {
    \infer[\mbox{($\wedge$R)}]{\SEQ{A, B}{A A B}}    
    {
    \infer[\mbox{(wL)}]{\SEQ{A,B}{A}}    
    {
    {\SEQ{A}{A}}     
    }
    &
    \infer[\mbox{(wL)}]{\SEQ{A,B}{B}}    
    {
    {\SEQ{B}{B}}  
    }  
    }   
    }
    }
}
$$

\item ($\top$1)
$$
\infer[\mbox{(cut)}]{\SEQ{A, \top   C}{A B}}
{
    \infer[\mbox{($\wedge$R)}]{\SEQ{A, \top   C}{A A  \top  A A B}}    
    {
    \infer[\mbox{(wL)}]{\SEQ{A, \top   C}{A}}    
    {
    {\SEQ{A}{A}}     
    }
    &
    \infer[\mbox{(wL)}]{\SEQ{A, \top   C}{\top  A A B}}    
    {
    \infer[\mbox{(/R)}]{\SEQ{\top   C}{\top  A A B}}    
    {
    \infer[\mbox{($\wedge$L1)}]{\SEQ{A A B}{C}}    
    {
    \infer[\mbox{($\wedge$L2)}]{\SEQ{A, A A B}{C}}    
    {
    {\SEQ{A, B}{C}}  
    }
    }
    }
    }
    }
    &
    {\SEQ{A A  \top  A A B}{A B}}    
}
$$

\item ($\top$2)

$$
\infer[\mbox{(/R)}]{\SEQ{\top  \top  A A B}{\top  A B}}    
    {
    \infer[\mbox{(cut)}]{\SEQ{A B}{\top  A A B}}    
    {
    {\SEQ{A B}{AA\top  A A B}}    
    &
    \infer[\mbox{($\wedge$L2)}]{\SEQ{AA\top AAB}{\top  A A B}}    
    {
    {\SEQ{\top A A B}{\top  A A B}} 
    }
}
}
$$

$$
\infer[\mbox{(cut)}]{\SEQ{\GA}{\top   A B}}
{
    {\SEQ{\GA}{ A A B}}    
    &
    \infer[\mbox{(cut)}]{\SEQ{A A B}{\top  A B}}    
    {
    \infer[\mbox{($\wedge$R)}]{\SEQ{A A B}{\top   \top  A A B}}    
    {
    \infer[\mbox{(wL)}]{\SEQ{A A B}{\top}}
    {{\SEQ{}{\top}}  
    }
    &   
    {\SEQ{A A B}{A A B}}  
    }
    &
    \infer{\SEQ{\top  \top  A A B}{\top  A B}}
    {\vdots} 
    }
}
$$

\item ($\top$3)

$$
\infer[\mbox{(cut)}]{\SEQ{A B, \GA}{\DE}}
{
    \infer[\mbox{(cut)}]{\SEQ{A B}{A}}    
    {
    {\SEQ{A B}{A A \top  A A B}}     
    &
    \infer[\mbox{($\wedge$L1)}]{\SEQ{A A \top  A A B}{A}}   
    {
    {\SEQ{A}{A}}   
    }
    }
    &
    {\SEQ{A, \GA}{\DE}}   
}
$$

\normalsize
\end{description}
\end{proof}

\begin{theorem}[Soundness theorem for {\bf OMR$^{\top}$}]\label{soundtop} If $\GA \Rightarrow \DE$ is provable in {\bf OMR$^{\top}$}, then 
$\GA \Rightarrow \DE$ is valid.
\end{theorem}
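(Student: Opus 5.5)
The quickest route is to combine two results already proved. By Lemma \ref{=Stop}, a sequent $\GA \Rightarrow \DE$ is provable in {\bf OMR$^{\top}$} exactly when it is provable in {\bf OMR$^{S}$}. By Theorem \ref{soundS}, every sequent provable in {\bf OMR$^{S}$} is valid. Together these give the statement at once. The only thing to check is that Lemma \ref{=Stop} does not itself rely on soundness of {\bf OMR$^{\top}$}. It does not: its proof is purely syntactic, deriving each of (S1), (S2), ($\top$1), ($\top$2), ($\top$3) inside the other system.

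For a self-contained argument, I would instead repeat the pattern of Theorem \ref{soundOMR-}. The proof is by induction on the derivation in {\bf OMR$^{\top}$}. The cases for the axioms and the rules of {\bf OMR$^{-}$} are already handled there, so only the three new rules need to be checked in an arbitrary OM-model.

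\textbf{($\top$3).} Since $V(AB) = V(A) \cap (V(A)\cap V(B))^{\bot} \subseteq V(A)$, any point falsifying $AB, \GA \Rightarrow \DE$ also falsifies $A, \GA \Rightarrow \DE$.

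\textbf{($\top$2).} Assume $\GA \Rightarrow \DE$ is false at a point $x$, with $\DE = \{AAB\}$. Then $x$ makes every formula of $\GA$ true, so validity of the premise gives $x \in V(AAB) = V(A)\cap V(B)$ by the earlier lemma. It remains to show $V(A)\cap V(B) \subseteq V(\top AB) = V(AB)^{\bot}$. For any $y \in V(AB)$ we have $y \in (V(A)\cap V(B))^{\bot}$, so $x \perp y$ by symmetry of $\perp$. Hence $x \in V(AB)^{\bot}$.

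\textbf{($\top$1).} This is the main step. Validity of the premise means $V(A)\cap V(B) \subseteq V(C)$. We need $V(A) \cap V(C)^{\bot} \subseteq V(A)\cap (V(A)\cap V(B))^{\bot}$. From $V(A)\cap V(B) \subseteq V(C)$, antitonicity of $^{\bot}$ gives $V(C)^{\bot} \subseteq (V(A)\cap V(B))^{\bot}$. Intersecting both sides with $V(A)$ gives the inclusion.

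Each of these steps uses only irreflexivity and symmetry of $\perp$, the closure of $\sigma$, and (for $AAB$) orthomodularity via the earlier lemma. I therefore expect no genuine obstacle; the first route is the one I would present.
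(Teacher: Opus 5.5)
Your first route is exactly the paper's proof: soundness of \textbf{OMR}$^{\top}$ is obtained as an immediate corollary of Theorem~\ref{soundS} and Lemma~\ref{=Stop}, and you are right that the lemma is purely syntactic, so there is no circularity. Your alternative direct semantic check of ($\top$1)--($\top$3) is also correct, with one slip: in the ($\top$2) case you should assume that the conclusion $\Gamma \Rightarrow \top AB$, not the premise, is false at $x$. The paper does not need this alternative.
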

\begin{proof}
A corollary of Theorem \ref{soundS} and Lemma \ref{=Stop}.
\end{proof}

\begin{lemma} [Disjunction theorem for {\bf OMR$^{-}$}, {\bf OMR$^{S}$}, and {\bf OMR$^{\top}$}]\label{disjOMR-} If $\GA \Rightarrow \DE$ is provable in {\bf OMR$^{-}$} ({\bf OMR$^{S}$}, {\bf OMR$^{\top}$}) and $\DE \neq \emptyset$, then there exists $A \in \DE$ such that 
$\GA \Rightarrow A$ is provable in {\bf OMR$^{-}$} ({\bf OMR$^{S}$}, {\bf OMR$^{\top}$}).
\end{lemma}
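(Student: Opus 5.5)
Since every sequent mentioned in the statement is provable, the natural route is induction on the height of the derivation of $\GA \Rightarrow \DE$. The claim to carry through the induction is: if $\vdash \GA \Rightarrow \DE$ and $\DE \neq \emptyset$, then $\vdash \GA \Rightarrow A$ for some $A \in \DE$. We treat the three systems simultaneously, because the cases for the rules of {\bf OMR$^{-}$} are common to all of them.

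\emph{Base cases.} The axioms $\SEQ{A}{A}$ and $\SEQ{}{\top}$ already have a singleton succedent. The same holds for (S1) and (S2) of {\bf OMR$^{S}$}.

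\emph{Rules whose conclusion has at most one succedent formula.} These are (/L) (empty succedent, so vacuous), (/R), ($\top$1) and ($\top$2). In each case the conclusion is already of the required form, so there is nothing to do.

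\emph{Left rules.} For (wL), ($\wedge$L1), ($\wedge$L2) and ($\top$3), the premise has the same nonempty succedent $\DE$. The induction hypothesis gives some $A \in \DE$ with the premise-antecedent $\Rightarrow A$ provable. Reapplying the same rule yields $\GA \Rightarrow A$.

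\emph{(wR).} Suppose the conclusion is $\GA \Rightarrow \DE, B$. If $\DE \neq \emptyset$, apply the induction hypothesis to the premise $\GA \Rightarrow \DE$. If $\DE = \emptyset$, the premise is $\GA \Rightarrow$, and one application of (wR) gives $\GA \Rightarrow B$.

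\emph{($\wedge$R).} The premises are $\GA \Rightarrow \DE, A$ and $\GA \Rightarrow \DE, B$. Apply the induction hypothesis to both.
\begin{itemize}
\item If either premise yields some $D \in \DE$, we are done.
\item Otherwise we have $\GA \Rightarrow A$ and $\GA \Rightarrow B$, and ($\wedge$R) with empty context gives $\GA \Rightarrow AAB$.
\end{itemize}

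\emph{Cut.} The conclusion is $\GA,\PI \Rightarrow \DE,\SI$, from $\GA \Rightarrow \DE, A$ and $A,\PI \Rightarrow \SI$. The induction hypothesis on the left premise gives one of two outcomes.
\begin{itemize}
\item Some $D \in \DE$ with $\GA \Rightarrow D$: weaken on the left by $\PI$ to get $\GA,\PI \Rightarrow D$.
\item $\GA \Rightarrow A$: apply the induction hypothesis to the right premise. If $\SI = \emptyset$, we get $\GA,\PI \Rightarrow$ by cut, and this case arises only when $\DE \neq \emptyset$; then (wR) with any $D\in\DE$ finishes it. If $\SI \neq \emptyset$, we obtain $A,\PI \Rightarrow S$ for some $S \in \SI$, and cutting gives $\GA,\PI \Rightarrow S$.
\end{itemize}
The case where the left premise's succedent is only $A$ (i.e. $\DE=\emptyset$) is covered by the second outcome.

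\emph{Main difficulty.} There is no deep obstacle. The points needing care are the cut and ($\wedge$R) cases, and the checks that every rule with a multi-formula succedent preserves context. The key structural observation is that all genuinely non-classical rules — (/L), (/R), ($\top$1), ($\top$2) — have at most one formula on the right. That is exactly what makes the disjunction property hold, in contrast to classical sequent calculus.
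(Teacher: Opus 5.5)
Your proof is correct and takes essentially the same approach as the paper. The paper simply defers to Nishimura's induction on derivations for {\bf GOM}, and your case analysis carries out that induction explicitly for {\bf OMR$^{-}$}, {\bf OMR$^{S}$} and {\bf OMR$^{\top}$}: the axioms, (/L), (/R), ($\top$1) and ($\top$2) have at most one succedent formula, and the (cut) and ($\wedge$R) cases are handled by applying the induction hypothesis to both premises.
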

\begin{proof}
The proof is almost the same as the one for {\bf GOM} in \cite{nis1}.
\end{proof}

We now define the tools necessary to prove completeness.
An {\it infinite sequent} is defined as $\GA \Rightarrow \DE$, where $\GA$ and $\DE$ are (potentially infinite) sets of formulas. Although we employ the same symbols as for sequent, it should be clear from the context which interpretation is meant.
We define $\GA' \Rightarrow \DE'$ to be a {\it finite subsequent} of $\GA \Rightarrow \DE$ if
$\GA' \subseteq \GA$, $\DE' \subseteq \DE$, and $\GA'$ and $\DE'$ are finite sets.
We say that an infinite sequent is unprovable if all of its finite subsequents are not provable.
The set of all formulas for $\mathcal{L}_2$ is denoted by $\Omega_2$.

\begin{lemma}\label{maxunp}
If $\GA \Rightarrow \DE$ (consisting of $\mathcal{L}_2$) is unprovable in {\bf OMR$^{\top}$},
there exists an unprovable infinite sequent $\GA' \Rightarrow \DE'$ such that $\GA \subseteq \GA'$, $\DE \subseteq \DE'$, and $\GA' \cup \DE' = \Omega_2$. 
\end{lemma}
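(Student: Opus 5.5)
This is a Lindenbaum-style extension, and the only proof-theoretic tool needed is (cut) together with weakening. Since $\mathcal{L}_2$ is built from countably many propositional variables, $\top$, and one binary connective, the set $\Omega_2$ is countable. We fix an enumeration $C_0, C_1, C_2, \ldots$ of $\Omega_2$ and build a chain of infinite sequents $\GA_0 \Rightarrow \DE_0 \subseteq \GA_1 \Rightarrow \DE_1 \subseteq \cdots$, starting with $\GA_0 = \GA$ and $\DE_0 = \DE$.

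At stage $n$ we assume $\GA_n \Rightarrow \DE_n$ is unprovable in {\bf OMR$^{\top}$}, meaning every finite subsequent is unprovable. If $\GA_n, C_n \Rightarrow \DE_n$ is unprovable, we set $\GA_{n+1} = \GA_n \cup \{C_n\}$ and $\DE_{n+1} = \DE_n$. Otherwise we set $\GA_{n+1} = \GA_n$ and $\DE_{n+1} = \DE_n \cup \{C_n\}$. Finally we put $\GA' = \bigcup_n \GA_n$ and $\DE' = \bigcup_n \DE_n$. By construction $\GA \subseteq \GA'$, $\DE \subseteq \DE'$, and every $C_n$ lies in $\GA' \cup \DE'$, so $\GA' \cup \DE' = \Omega_2$.

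The key step is to show that unprovability is preserved at each stage, specifically in the second case. Suppose both $\GA_n, C_n \Rightarrow \DE_n$ and $\GA_n \Rightarrow \DE_n, C_n$ are provable, i.e.\ each has a provable finite subsequent.

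If the subsequent of the first does not contain $C_n$ on the left, or the subsequent of the second does not contain $C_n$ on the right, then it is already a finite subsequent of $\GA_n \Rightarrow \DE_n$. This contradicts the hypothesis.

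Otherwise the two provable finite subsequents have the form $C_n, \PI_1 \Rightarrow \SI_1$ and $\PI_2 \Rightarrow \SI_2, C_n$, with $\PI_i \subseteq \GA_n$ and $\SI_i \subseteq \DE_n$ finite. Applying (cut) on $C_n$ gives $\PI_2, \PI_1 \Rightarrow \SI_2, \SI_1$, which is a provable finite subsequent of $\GA_n \Rightarrow \DE_n$. Again this is a contradiction.

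Hence $\GA_{n+1} \Rightarrow \DE_{n+1}$ is unprovable in either case, and by induction every stage is unprovable. Sequents are sets, so no contraction rule is needed when merging the two sides after the cut. Weakening (wL), (wR) is not even needed here, since unprovability is stated for finite subsequents directly.

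It remains to check that the limit $\GA' \Rightarrow \DE'$ is unprovable. Any finite subsequent of it uses only finitely many formulas, so it is contained in some stage $\GA_n \Rightarrow \DE_n$ and is therefore unprovable. The only step requiring care is the case analysis in the cut argument above. The particular rules of {\bf OMR$^{\top}$}, in particular the restricted form of (/R), play no role, because (cut) is available with arbitrary contexts.
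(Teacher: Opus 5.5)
Your proposal is correct and follows the same Lindenbaum-style construction as the paper. Your cut-based case analysis spells out the step the paper only asserts, namely that one of $A,\GA \Rightarrow \DE$ or $\GA \Rightarrow \DE, A$ remains unprovable. One small caveat concerns your remark that weakening is unnecessary: if the initial $\GA \Rightarrow \DE$ is a finite sequent read merely as non-derivable, it is (wL) and (wR) that guarantee none of its finite subsequents is derivable, so that stage $0$ satisfies your invariant.
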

\begin{proof}
Because (cut) is included in {\bf OMR$^{\top}$}, if $\GA \Rightarrow \DE$ is unprovable, for all $A \in \Omega_2$, at least one of $A, \GA \Rightarrow \DE$ or $\GA \Rightarrow \DE, A$ is unprovable.
Let the formulas in $\Omega_2$ be enumerated as $A_1, A_2,\ldots$.
Starting from $\GA \Rightarrow \DE (= \GA_0 \Rightarrow \DE_0)$,
we construct $\GA_0 \Rightarrow \DE_0, \GA_1 \Rightarrow \DE_1, \GA_2 \Rightarrow \DE_2, \ldots$ by successively adding each formula to the left or right side in such a way that unprovability is preserved. Then, we define $\GA' \Rightarrow \DE'$ as $\underset{n < \omega}{\bigcup}\GA_n \Rightarrow \underset{n < \omega}{\bigcup}\DE_n$.
Since any finite subsequent of $\GA' \Rightarrow \DE'$ contains only finitely many formulas, it is already a subsequent of some $\GA_n \Rightarrow \DE_n$. Each $\GA_n \Rightarrow \DE_n$ is unprovable, and hence so is any such finite subsequent. Thus, by definition, $\GA' \Rightarrow \DE'$  is unprovable.
\end{proof}

We define  $\top \GA = \{ \top A |\mkern2mu A \in \GA \}$.
A canonical model $(W_c, \perp_c, \sigma_c, V_c )$ is defined as follows:

\begin{enumerate}

\item $W_c$ is defined as the set of all infinite sequents $\GA \Rightarrow \DE$ that satisfy the following conditions:
\begin{description}
\item $\GA \Rightarrow \DE$ is unprovable in {\bf OMR$^{\top}$},
\item $\GA \cup \DE = \Omega_2$.
\end{description}

\item $\perp_c \stackrel{\mathrm{def}}{=} \{ (\GA \Rightarrow \DE, \GA' \Rightarrow \DE') | \exists A \in \Omega_2, A \in \GA$ and $\top A \in \GA'$\}.

\item $\sigma_c \stackrel{\mathrm{def}}{=} \{ \|A\| |\mkern2mu A \in \Omega_2 \}$, where $\|A\| = \{\GA \Rightarrow \DE \in W_c | A \in \GA\}$.

\item $V_c(p) \stackrel{\mathrm{def}}{=} \{ \GA \Rightarrow \DE \in W_c |\mkern2mu p \in \GA \}$ .

\end{enumerate}

\begin{lemma}
Let $( \GA \Rightarrow\DE) \in W_c$. 
If a finite sequent $\GA' \Rightarrow \DE'$ is provable, then it is impossible that
$\GA' \subseteq \GA$ and $\DE' \subseteq \DE$. Consequently, if $\GA' \Rightarrow A$ is provable and $\GA' \subseteq \GA$, then $A \in \GA$. If $A \Rightarrow \DE'$ is provable and $\DE' \subseteq \DE$, then $A \in \DE$.
\end{lemma}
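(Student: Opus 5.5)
The first claim follows straight from the definitions of $W_c$ and of an unprovable infinite sequent. Since $(\GA \Rightarrow \DE) \in W_c$, the infinite sequent $\GA \Rightarrow \DE$ is unprovable in {\bf OMR$^{\top}$}, which means every finite subsequent of it is unprovable. Now suppose a finite sequent $\GA' \Rightarrow \DE'$ is provable and satisfies $\GA' \subseteq \GA$ and $\DE' \subseteq \DE$. Then $\GA' \Rightarrow \DE'$ is a finite subsequent of $\GA \Rightarrow \DE$ and is provable. This contradicts the unprovability of $\GA \Rightarrow \DE$, so no provable finite sequent can sit inside $\GA \Rightarrow \DE$ in this way.

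For the two consequences I use the maximality condition $\GA \cup \DE = \Omega_2$. First, suppose $\GA' \Rightarrow A$ is provable with $\GA'$ finite and $\GA' \subseteq \GA$, and assume toward a contradiction that $A \notin \GA$. Maximality gives $A \in \DE$, so $\{A\} \subseteq \DE$. Then $\GA' \Rightarrow A$ is a provable finite subsequent of $\GA \Rightarrow \DE$, which the first part rules out. Hence $A \in \GA$.

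The second consequence is symmetric. Suppose $A \Rightarrow \DE'$ is provable with $\DE' \subseteq \DE$, and assume $A \notin \DE$. Then $A \in \GA$ by $\GA \cup \DE = \Omega_2$. This makes $A \Rightarrow \DE'$ a provable finite subsequent, again a contradiction, so $A \in \DE$.

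No step here is a real obstacle. The only care needed is that $\GA'$ and $\DE'$ are finite, which they are because sequents in the calculus are finite by definition. Also, no closure of provability under weakening is needed, since the definition of an unprovable infinite sequent already quantifies over all finite subsequents.
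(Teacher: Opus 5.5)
Your proof is correct and follows the paper's argument. The first claim holds because a provable $\Gamma' \Rightarrow \Delta'$ with $\Gamma' \subseteq \Gamma$ and $\Delta' \subseteq \Delta$ would be a provable finite subsequent of an element of $W_c$; the paper leaves the two consequences implicit, and your derivation of them from $\Gamma \cup \Delta = \Omega_2$ is the intended one.
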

\begin{proof}
If $\GA' \subseteq \GA$ and $\DE' \subseteq \DE$, then $\GA' \Rightarrow \DE'$ is a finite sub-sequent of $\GA \Rightarrow \DE$. However, the definition of $W_c$ implies that all such finite subsequents must be unprovable, leading to a contradiction.
\end{proof}

\begin{lemma}\label{pOMm} In $(W_c, \perp_c, \sigma_c, V_c )$, for any $\GA \Rightarrow \DE \in W_c$ and $A \in \Omega_2$, $(\GA \Rightarrow \DE) \models A $ iff $A \in \GA$.
\end{lemma}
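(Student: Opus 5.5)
The proof is by induction on the construction of $A$. We show simultaneously for all $x = (\GA \Rightarrow \DE) \in W_c$ that $x \in V_c(A)$ iff $A \in \GA$, i.e.\ $V_c(A) = \|A\|$. Throughout we use the preceding lemma (a provable finite sequent cannot sit inside a point of $W_c$), together with $\GA \cup \DE = \Omega_2$. For a point $y$ we write $\GA_y \Rightarrow \DE_y$ for the infinite sequent $y$.

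\emph{Base cases.} For $A = p$ the claim is the definition of $V_c$. For $A = \top$ we have $V_c(\top) = W_c$. Since $\Rightarrow \top$ is an axiom, $\top \notin \DE$, hence $\top \in \GA$ for every point.

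\emph{Inductive case $A = BC$.} By the induction hypothesis,
$V_c(BC) = \|B\| \cap (\|B\| \cap \|C\|)^{\perp_c}$.
By the definitions of $X^{\perp}$ and $\perp_c$, $x \in V_c(BC)$ means two things. First, $B \in \GA_x$. Second, for every $y$ with $B, C \in \GA_y$ there is a $D$ with $D \in \GA_y$ and $\top D \in \GA_x$.

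($\Leftarrow$) Assume $BC \in \GA_x$. Rule $(\top 3)$ gives $BC \Rightarrow B$, so $B \in \GA_x$. Now let $B, C \in \GA_y$ and take $D = BBC$. Then $B, C \Rightarrow BBC$ is provable by (wL) and $(\wedge$R), so $BBC \in \GA_y$. The sequent $BC \Rightarrow \top BBC$ is provable in {\bf OMR$^{\top}$}: it is the subderivation built in the (S1) case of Lemma \ref{=Stop}, using $(\wedge$R), (/R), $(\top 2)$ and (cut). Hence $\top BBC \in \GA_x$, and so $y \perp_c x$ as required.

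($\Rightarrow$) This is the direction I expect to be the main obstacle, since it requires building a new world. Suppose $x \in V_c(BC)$ but $BC \notin \GA_x$, i.e.\ $BC \in \DE_x$; we derive a contradiction. We already know $B \in \GA_x$. Consider the infinite sequent
$$B, C \Rightarrow \{D \mid \top D \in \GA_x\}.$$
We claim it is unprovable. Suppose some finite subsequent $B, C \Rightarrow D_1, \ldots, D_n$ were provable.

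If $n = 0$, we first use (wR) to add a formula $D$ with $\top D \in \GA_x$. Such a $D$ exists: take $D = \top\top$. Indeed $\top\top\top$ is $\top \wedge \top$ in our notation, and $\Rightarrow \top\top\top$ is derivable by $(\wedge$R) from $\Rightarrow \top$, so $\top\top\top \in \GA_x$.

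By the disjunction theorem (Lemma \ref{disjOMR-}) we then obtain $B, C \Rightarrow D$ for a single $D$ with $\top D \in \GA_x$. Rule $(\top 1)$ now yields $B, \top D \Rightarrow BC$. This is impossible, because $B, \top D \in \GA_x$ and $BC \in \DE_x$.

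So the sequent is unprovable, and Lemma \ref{maxunp} extends it to some $y \in W_c$ with $B, C \in \GA_y$. For every $D \in \GA_y$ we have $D \notin \DE_y$, and therefore $\top D \notin \GA_x$. Hence $y \not\perp_c x$, contradicting $x \in (\|B\| \cap \|C\|)^{\perp_c}$. This completes the induction.
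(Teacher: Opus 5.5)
Your proof is correct and follows the paper's argument: induction on formulas, ($\top$3) to obtain $BC \Rightarrow B$, a ($\top$2)-based sequent that supplies the orthogonality witness, and, for the hard direction, a Lindenbaum extension of $B, C \Rightarrow \ldots$ combined with the disjunction lemma and ($\top$1). The differences are minor refinements. You choose the witness $BBC \in \GA_y$, $\top BBC \in \GA_x$ and the succedent $\{D \mid \top D \in \GA_x\}$ in place of $\top\GA$; this gives $y \perp_c x$ in exactly the direction required by the definition of $X^{\perp}$, whereas the paper proves $x \perp_c y$ and tacitly relies on symmetry, which is only established in Lemma \ref{cOMm}. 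It also lets you apply ($\top$1) directly, without the extra cut against $C \Rightarrow \top\top C$, and you handle the $\top$ base case and the empty-succedent case that the paper skips.
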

\begin{proof}
The proof proceeds by induction on the structure of formulas.
\begin{list}{}{
  \setlength{\leftmargin}{8pt}
}
\item Suppose that $p \in \GA$. From the definition of $V_c$, $(\GA \Rightarrow \DE) \models p$.

\item Suppose that $p \in \DE$. From the definition of $V_c$, $(\GA \Rightarrow \DE) \not\models p$.

\item Suppose that $AB \in \GA$. 
To show that $AB$ is true at $(\GA \Rightarrow \DE) \in W_c$, it suffices to establish 
$(\GA \Rightarrow \DE) \models A$ and $(\GA \Rightarrow \DE)$ is orthogonal 
to all $(\GA' \Rightarrow \DE') \in W_c$ such that $(\GA' \Rightarrow \DE') \models A$
and $(\GA' \Rightarrow \DE') \models B$.

$AB \Rightarrow A$ is provable in {\bf OMR$^{\top}$} as follows:

$$
\infer[\mbox{($\top$3)}]{\SEQ{AB}{A}}{
 \SEQ{A}{A} 
}
$$

Therefore, $A \in \GA$. From the induction hypothesis, $(\GA \Rightarrow \DE) \models A$.


From the induction hypothesis, for all $(\GA' \Rightarrow \DE') \in W_c$, 
if $(\GA' \Rightarrow \DE') \models A$ and $(\GA' \Rightarrow \DE') \models B$,
then $A,B \in \GA'$. 
$A,B \Rightarrow \top AB$ is provable in {\bf OMR$^{\top}$} as follows:

$$
\infer[\mbox{($\top$2)}]{\SEQ{A,B}{\top AB}}
{
  \infer[\mbox{($\wedge$R)}]{\SEQ{A,B}{AAB}}
  {
        \infer[\mbox{(wL)}]{\SEQ{A,B}{A}}    
    {
    {
    {\SEQ{A}{A}}
    }       
    }
    &
    \infer[\mbox{(wL)}]{\SEQ{A,B}{B}}    
    {
    {
    {\SEQ{B}{B}}
    }       
    }
  }
}
$$

Therefore, if $A,B \in \GA'$, then $\top AB \in \GA'$. From the definition of $\perp_c$, $(\GA \Rightarrow \DE)\perp_c(\GA' \Rightarrow \DE')$.
Therefore, $(\GA \Rightarrow \DE)$ is orthogonal 
to all $(\GA' \Rightarrow \DE') \in W_c$ such that $(\GA' \Rightarrow \DE') \models A$
and $(\GA' \Rightarrow \DE') \models B$.

\item Suppose that $AB \in \DE$. 
If $A \in \DE$, from the induction hypothesis, $(\GA \Rightarrow \DE) \not\models A$.
Therefore, $(\GA \Rightarrow \DE) \not\models AB$.

Suppose that $A \in \GA$. 
For the sake of contradiction, suppose that $A, B \Rightarrow \top \GA$ is provable in {\bf OMR$^{\top}$}.
From Lemma \ref{disjOMR-}, there exists $C \in \GA$ such that $A,B\Rightarrow \top  C$ is provable.
Then, $A,C \Rightarrow AB$ is provable as follows:

$$
\infer[\mbox{(cut)}]{\SEQ{A, C}{AB}}
{
    \infer[\mbox{($\wedge$R)}]{\SEQ{C}{\top  \top  C}}    
    {\SEQ{C}{\top}
    &
    {\SEQ{C}{C}}
    }      
    &
    \infer[\mbox{($\top$1)}]{\SEQ{A, \top  \top  C}{AB}}    
    {\SEQ{A, B}{\top  C}}
}
$$

Because $A,C \in \GA$ and $AB \in \DE$, $\GA \Rightarrow \DE$ is provable. This is a contradiction.
Therefore, $A, B \Rightarrow \top  \GA$ is not provable.
Then, from Lemma \ref{maxunp}, there exists an unprovable infinite sequent $\GA' \Rightarrow \DE'$ such that $A, B \in \GA'$, $\top \GA \subseteq \DE'$ and $\GA' \cup \DE' = \Omega_2$.
We have $(\GA' \Rightarrow \DE') \not\perp_c (\GA \Rightarrow \DE)$ because 
$\top \GA \subseteq \DE'$.
From the induction hypothesis, $(\GA' \Rightarrow \DE') \models A$ and $(\GA' \Rightarrow \DE') \models B$.
This means that one of the above conditions for 
$AB$ to be true is violated.
Therefore, $(\GA \Rightarrow \DE) \not\models AB$.
\end{list}
\end{proof}

\begin{lemma}\label{cOMm} $(W_c, \perp_c, \sigma_c, V_c )$ is an OM-model.
\end{lemma}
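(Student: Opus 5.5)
We verify each clause of the definition of an OM-model in turn, relying on Lemma \ref{pOMm} throughout: for every formula $A$, $V_c(A) = \|A\|$.

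\textbf{The frame.} We first check that $(W_c,\perp_c)$ is an O-frame.

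\emph{Nonemptiness.} The sequent $\top\top\top \Rightarrow$ is not valid, since $V(\top\top\top)=W\cap(W\cap V(\top\top))^\perp = W$. By Theorem \ref{soundtop} it is therefore unprovable, and Lemma \ref{maxunp} extends it to an element of $W_c$.

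\emph{Irreflexivity.} Suppose $(\GA\Rightarrow\DE)\perp_c(\GA\Rightarrow\DE)$. Then $A,\top A\in\GA$ for some $A$. But $\top A, A\Rightarrow$ is provable by (/L) from $A\Rightarrow A$, which contradicts unprovability.

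\emph{Symmetry.} Suppose $A\in\GA$ and $\top A\in\GA'$. We want some $B\in\GA'$ with $\top B\in\GA$; the choice $B=\top A$ works. We need $A\Rightarrow\top\top A$, which follows by ($\wedge$R) from $A\Rightarrow\top$ (axiom plus (wL)) and $A\Rightarrow A$, since $\top\top A$ is literally $\top\top A$ as a conjunction of $\top$ and $A$. Hence $\top\top A\in\GA$ by the closure lemma, and $\top A\in\GA'$ as required.

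\textbf{The family $\sigma_c$.} Next we show that $\sigma_c$ consists of $\perp_c$-closed sets and satisfies the closure conditions.

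\emph{Complements.} The key claim is $\|A\|^{\perp_c}=\|\top A\|$. Since $V_c(\top A)=V_c(\top)\cap(V_c(\top)\cap V_c(A))^{\perp_c}=V_c(A)^{\perp_c}$, this follows immediately from Lemma \ref{pOMm} together with $V_c(\top)=W_c$. The latter holds because $\top\in\GA$ for every world: otherwise the provable sequent $\Rightarrow\top$ would be a subsequent. So $\sigma_c$ is closed under $^{\perp_c}$.

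\emph{Closedness.} Each $\|A\|$ is $\perp_c$-closed because $\|A\|^{\perp_c\perp_c}=\|\top\top A\|=\|A\|$. The last equality needs $\top\top A\Rightarrow A$, which holds by ($\wedge$L2).

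\emph{Intersections and the constants.} We have $\|A\|\cap\|B\|=\|AAB\|$ by ($\wedge$L1), ($\wedge$L2) and ($\wedge$R) together with the closure lemma. Moreover $W_c=\|\top\|$ and $\emptyset=\|\top\top\|$: no world contains $\top\top$, since $\top\top\Rightarrow$ is provable by (/L) applied to $\Rightarrow\top$. (If (/L) requires a left formula, we instead use irreflexivity with $\top$ and $\top\top$.)

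\emph{Valuation.} Finally, $V_c(p)=\|p\|\in\sigma_c$ by definition.

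\textbf{The orthomodular law.} For $X=\|A\|$ and $Y=\|B\|$, translating the operations via the identities above gives
\[
X\cap(X\cap(X\cap Y)^{\perp})^{\perp}=\|AA(AAB)\|,
\]
up to the $\wedge$/$\top$ encodings. This is $V_c$ of a formula semantically equivalent to $AAB$, but we cannot appeal to semantics inside the canonical model. The cleanest route is instead to note $V_c(AAB)=\|A\|\cap(\|A\|\cap\|AB\|)^{\perp_c}$ directly from the inductive clause, and likewise $\|AB\|=\|A\|\cap(\|A\|\cap\|B\|)^{\perp_c}$. Substituting, the left side equals $V_c(AAB)=\|AAB\|=\|A\|\cap\|B\|=X\cap Y$, where the middle equality is Lemma \ref{pOMm} and the last is the ($\wedge$) rules. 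In other words, orthomodularity is forced purely by the syntactic rules ($\wedge$L1,2) and ($\wedge$R) for $AAB$, combined with the truth lemma.

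The main obstacle is bookkeeping: making sure every set-theoretic expression is rewritten as $V_c$ of an actual formula, so that Lemma \ref{pOMm} applies without circularity (in particular, $\|A\|\cap\|B\|=V_c(AAB)$ uses provability, not orthomodularity). The only genuinely proof-theoretic inputs are the derivations of $A\Rightarrow\top\top A$, $\top\top A\Rightarrow A$, and $\top A,A\Rightarrow$.
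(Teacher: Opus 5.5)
Your proof is correct. For irreflexivity, symmetry, the closure conditions on $\sigma_c$ and the orthomodular law it follows the paper. The orthomodular law in particular is argued exactly as there: unfold $V_c(AAB)$ to $X\cap(X\cap(X\cap Y)^{\perp})^{\perp}$ and use $\|AAB\|=\|A\|\cap\|B\|$ from the $\wedge$-rules.

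The genuine difference is $\perp_c$-closedness. The paper proves only $V_c(p)^{\perp\perp}\subseteq V_c(p)$, and it does so by a fresh Lindenbaum-style argument. If $\top p\Rightarrow\top\Gamma$ were provable, the disjunction lemma, (/R), (cut) and the provability of $A\Rightarrow\top\top A$ and $\top\top A\Rightarrow A$ would make $\Gamma\Rightarrow\Delta,p$ provable. Otherwise Lemma~\ref{maxunp} gives a world in $V_c(p)^{\perp}$ not orthogonal to $(\Gamma\Rightarrow\Delta,p)$.

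You instead read $\|A\|^{\perp_c}=\|\top A\|$ off Lemma~\ref{pOMm}. You then get $\|A\|^{\perp_c\perp_c}=\|\top\top A\|=\|A\|$ from the same two provable sequents. This route has three advantages:
\begin{itemize}
\item it is shorter;
\item it needs no disjunction lemma at this point;
\item it handles every member of $\sigma_c$ at once, whereas the paper leaves the passage from variables to arbitrary $V_c(A)$ implicit.
\end{itemize}
The only cost is that closedness now rests on the truth lemma. This is harmless: that lemma's proof tacitly uses the symmetry of $\perp_c$ when it swaps the direction of orthogonality, and you establish symmetry syntactically beforehand, so there is no circularity. You also verify nonemptiness of $W_c$ and the $\top$ case of the truth lemma, both of which the paper omits.

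One small slip: $\top\top\Rightarrow$ is not an instance of (/L) applied to $\Rightarrow\top$, because (/L) needs a premise $B\Rightarrow C$ with exactly one antecedent formula. Your fallback works, though. For example, (/L) yields $\top\top,\top\Rightarrow$ from $\top\Rightarrow\top$, and a cut with $\Rightarrow\top$ then gives $\top\top\Rightarrow$.
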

\begin{proof}
\begin{description}
\item The irreflexivity of $\perp_c$

Because $A, \top A \Rightarrow $ is provable by (/L), for all $(\GA \Rightarrow \DE) \in W_c$ and
$A \in \Omega_2$, if $A \in \GA$, then $\top A \in \DE$. 
Since no formula can belong to both $\GA$ and $\DE$, it follows that $\top A \notin \GA$. Hence 
$(\GA \Rightarrow \DE) \not\perp_c (\GA \Rightarrow \DE)$.

\item The symmetry of $\perp_c$

Suppose that $(\GA \Rightarrow \DE) \perp_c (\GA' \Rightarrow \DE')$ because of $A \in \GA$ and $\top A \in \GA'$.
Since $A \Rightarrow \top \top A$ is provable by (wL) and ($\wedge$R), $\top \top A \in \GA$. Therefore,  $(\GA' \Rightarrow \DE') \perp_c (\GA \Rightarrow \DE)$. 

\item $\bot$-closedness of $V_c (p)$

By the definition of $\bot$-closed set, $X \subseteq X^{\bot\bot}$ holds for all $X \subseteq W$. Therefore, it suffices to prove that 
$V_c(p)^{\bot\bot} \subseteq V_c(p)$.
For the sake of contradiction, suppose that $V_c(p)^{\bot\bot} \not\subseteq V_c(p)$.
That is, there exists $(\GA \Rightarrow \DE, p) \in W_c$
such that $(\GA \Rightarrow \DE, p) \perp_c x$ for all $x \in V_c(p)^{\bot}$. 
If $\top p \Rightarrow \top \GA$ is provable, from Lemma \ref{disjOMR-},
there exists $\top C \in \top \GA$ such that $\top p \Rightarrow \top C$ is provable.
Then, $\GA \Rightarrow \DE, p$ is provable because of (/R), (cut), (wL), and the provability of $A \Rightarrow \top \top A$ and $\top \top A \Rightarrow A$.
Therefore, $\top p \Rightarrow \top \GA$ is not provable.
Then, from Lemma \ref{maxunp},  there exists an unprovable infinite sequent $\GA' \Rightarrow \DE'$ such that $\top p \in \GA'$, $\top \GA \subseteq \DE'$, and $\GA' \cup \DE' = \Omega_2$. This is a contradiction because $(\GA' \Rightarrow \DE') \not\perp_c (\GA \Rightarrow \DE, p)$ and $(\GA' \Rightarrow \DE') \in V_c(p)^{\bot}$.

\item $W_c \in \sigma_c$, $\emptyset \in \sigma_c$, if $X \in \sigma_c$, then $X^{\perp} \in \sigma_c$, and if $X \in \sigma_c$ and $Y \in \sigma_c$, then $X \cap Y \in \sigma_c$.

By Lemma \ref{pOMm}, $\sigma_c$ coincides with $\{ V_c(A) | A \in \Omega_2 \}$.
$V_c(\top) = W$, $V_c(p p \top p) = \emptyset$, if $V_c(A) = X$, then $V_c(\top A) = X^{\perp}$,
and if $V_c(A) = X$ and $V_c(B) = Y$, then $V_c(AAB) = X \cap Y$.

\item Orthomodularity

Because $AAB \Rightarrow B$ is provable by ($\wedge$L1), for all $\GA \Rightarrow \DE \in W_c$,
if $AAB \in \GA$, then $B \in \GA$.
Therefore, from Lemma \ref{pOMm}, for all $A,B \in \Omega_2$, if $(\GA \Rightarrow \DE) \models AAB $, then $(\GA \Rightarrow \DE) \models B$.
This implies that $V_c(AAB) (= V_c(A) \cap (V_c(A) \cap (V_c(A) \cap V_c(B))^{\bot})^{\bot}) \subseteq V_c(B)$. Therefore, $V_c(A) \cap (V_c(A) \cap (V_c(A) \cap V_c(B))^{\bot})^{\bot} \subseteq V_c(A) \cap V_c(B)$. An analogous argument applied to $A,B \Rightarrow AAB$ yields $V_c(A) \cap V_c(B) \subseteq V_c(A) \cap (V_c(A) \cap (V_c(A) \cap V_c(B))^{\bot})^{\bot}$.
\end{description}
\end{proof}

\begin{theorem}[Completeness theorem for {\bf OMR$^{\top}$}]\label{compOMRp} If $\GA \Rightarrow \DE$ (consisting of $\mathcal{L}_2$) is valid in all OM-models, then 
$\GA \Rightarrow \DE$ is provable in {\bf OMR$^{\top}$}.
\end{theorem}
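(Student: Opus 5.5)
We argue by contraposition and use the canonical model constructed above. Suppose that the finite sequent $\GA \Rightarrow \DE$ is not provable in {\bf OMR$^{\top}$}. We must exhibit an OM-model and a point at which $\GA \Rightarrow \DE$ is false.

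First, regard $\GA \Rightarrow \DE$ as an infinite sequent. Since it is finite, its only finite subsequents are subsequents of itself. Each of these is unprovable: if some $\GA'' \Rightarrow \DE''$ with $\GA''\subseteq\GA$, $\DE''\subseteq\DE$ were provable, then repeated (wL) and (wR) would make $\GA \Rightarrow \DE$ provable. So $\GA \Rightarrow \DE$ is unprovable in the infinite-sequent sense. Lemma \ref{maxunp} then gives an unprovable infinite sequent $\GA' \Rightarrow \DE'$ with $\GA \subseteq \GA'$, $\DE \subseteq \DE'$ and $\GA' \cup \DE' = \Omega_2$. By the definition of $W_c$, this means $(\GA' \Rightarrow \DE') \in W_c$.

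Next, Lemma \ref{cOMm} tells us that $(W_c, \perp_c, \sigma_c, V_c)$ is an OM-model. By Lemma \ref{pOMm}, every $A \in \GA \subseteq \GA'$ is true at the point $\GA' \Rightarrow \DE'$. For $B \in \DE \subseteq \DE'$ we also need $B \notin \GA'$. This holds because otherwise the axiom $B \Rightarrow B$ would be a provable finite subsequent of $\GA' \Rightarrow \DE'$. Hence, by Lemma \ref{pOMm}, every $B \in \DE$ is false at $\GA' \Rightarrow \DE'$. So $\GA \Rightarrow \DE$ is false at that point, and therefore not valid in all OM-models, which proves the contrapositive.

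The substantive work, namely the truth lemma and the verification of the OM-model conditions, is already done in Lemmas \ref{pOMm} and \ref{cOMm}, so no step here is a real obstacle. The only points needing care are the passage from a finite unprovable sequent to an unprovable infinite sequent via weakening, and the disjointness $\GA' \cap \DE' = \emptyset$ via the identity axiom.
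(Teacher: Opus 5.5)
Your proposal is correct and follows the paper's proof: argue by contraposition, extend the unprovable sequent to a point of $W_c$ via Lemma~\ref{maxunp}, then use Lemmas~\ref{cOMm} and~\ref{pOMm} to see that $\GA \Rightarrow \DE$ fails at that point of the canonical OM-model. The paper leaves two details implicit that you make explicit, and both are what is needed: weakening to pass from unprovability of the finite sequent to unprovability in the infinite-sequent sense, and the identity axiom to get $\GA' \cap \DE' = \emptyset$.
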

\begin{proof}
We prove the contrapositive.
Suppose $\GA \Rightarrow \DE$ is not provable. Then, from Lemma \ref{maxunp}, 
there exists an unprovable infinite sequent $\GA' \Rightarrow \DE'$ such that $\GA \subseteq \GA'$, $\DE \subseteq \DE'$, and $(\GA' \Rightarrow \DE') \in W_c$.
By Lemma \ref{pOMm} and \ref{cOMm}, $\GA \Rightarrow \DE$ is false at $(\GA' \Rightarrow \DE') \in W_c$ in the OM-model $(W_c, \perp_c, \sigma_c, V_c )$.
\end{proof}

\begin{theorem}[Completeness theorem for {\bf OMR$^{S}$}]\label{compOMRS} If $\GA \Rightarrow \DE$ (consisting of $\mathcal{L}_2$) is valid in all OM-models, then 
$\GA \Rightarrow \DE$ is provable in {\bf OMR$^{S}$}.
\end{theorem}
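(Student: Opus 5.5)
The statement follows by chaining two earlier results; no new semantic argument is needed. Let $\GA \Rightarrow \DE$ be a sequent of $\mathcal{L}_2$ that is valid in all OM-models. By Theorem \ref{compOMRp}, the completeness theorem for {\bf OMR$^{\top}$}, the sequent $\GA \Rightarrow \DE$ is provable in {\bf OMR$^{\top}$}. By the direction of Lemma \ref{=Stop} that transfers provability from {\bf OMR$^{\top}$} to {\bf OMR$^{S}$}, it is then provable in {\bf OMR$^{S}$}, which is the claim.

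The only point to check is that Lemma \ref{=Stop} really covers this direction. Both systems extend {\bf OMR$^{-}$} with the same base axioms and rules, so any {\bf OMR$^{\top}$}-derivation becomes an {\bf OMR$^{S}$}-derivation once each application of ($\top$1), ($\top$2) and ($\top$3) is shown admissible in {\bf OMR$^{S}$}. The proof of Lemma \ref{=Stop} supplies these derivations. Each uses only rules of {\bf OMR$^{-}$} together with (S1) and (S2), and each takes the premise of the simulated rule as an open leaf. This lets us replace rule instances one by one, by induction on the length of the {\bf OMR$^{\top}$}-derivation.

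The one place that needs care is the simulation of ($\top$2), which goes through an auxiliary derivation of $\top\top AAB \Rightarrow \top AB$. That derivation uses (/R) applied to $AB \Rightarrow \top AAB$, obtained by cutting (S1) with ($\wedge$L2). It does not depend on the premise $\Gamma \Rightarrow AAB$, so it is a closed {\bf OMR$^{S}$}-derivation and the simulation is legitimate. With that checked, the argument is a two-line corollary, and I expect no real obstacle.
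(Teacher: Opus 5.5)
Your proposal is correct and is exactly the paper's argument: the paper proves this theorem as a one-line corollary of Theorem \ref{compOMRp} together with the direction of Lemma \ref{=Stop} that transfers provability from \textbf{OMR$^{\top}$} to \textbf{OMR$^{S}$}. Your additional check is also accurate to the derivations in the proof of that lemma: the \textbf{OMR$^{S}$}-simulations of ($\top$1)--($\top$3) take the premise as an open leaf, and the auxiliary derivation of $\top\top AAB \Rightarrow \top AB$ is closed.
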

\begin{proof}
A corollary of Lemma \ref{=Stop} and Theorem \ref{compOMRp}.
\end{proof}

\section{Conclusion and Future Work}

In this study, we investigated restricted negation as a primitive
connective for OML and developed a proof system based on
it. We showed that restricted negation can be understood as a relative
form of orthogonal complementation, that conjunction and ordinary
negation are definable from it together with a constant, and that the
orthomodular law has a simple formulation in this language.
The proposed proof system gives a new proof-theoretic perspective on
OML. Instead of treating orthomodularity as an additional rule, it
incorporates orthomodular behavior into the rules for restricted negation
itself. In this sense, the system reformulates OML around a single
connective that reflects its orthomodular structure.
These results may serve as a basis for a more detailed analysis of restricted negation.

Future work includes a detailed comparison with existing proof systems
for OML and the development of an extension admitting cut elimination.
Although the sequent calculus constructed in this study does not admit cut elimination, this is not a peculiarity of the present framework; rather, it reflects a general difficulty in OML.
Although several proof systems admitting cut elimination have been studied for OL, corresponding results for OML remain limited.
Structured sequent formalisms, such as hypersequents or nested sequents \cite{kashima1994,Poggiolesi2009},
may provide a promising direction for this purpose.

\clearpage

\bibliographystyle{eptcs}
\bibliography{generic}

\begin{thebibliography}{10}
\providecommand{\bibitemdeclare}[2]{}
\providecommand{\surnamestart}{}
\providecommand{\surnameend}{}
\providecommand{\urlprefix}{Available at }
\providecommand{\url}[1]{\texttt{#1}}
\providecommand{\href}[2]{\texttt{#2}}
\providecommand{\urlalt}[2]{\href{#1}{#2}}
\providecommand{\doi}[1]{doi:\urlalt{https://doi.org/#1}{#1}}
\providecommand{\eprint}[1]{arXiv:\urlalt{https://arxiv.org/abs/#1}{#1}}
\providecommand{\bibinfo}[2]{#2}

\bibitemdeclare{article}{bat2}
\bibitem{bat2}
\bibinfo{author}{A.~\surnamestart Baltag\surnameend} \&
  \bibinfo{author}{S.~\surnamestart Smets\surnameend} (\bibinfo{year}{2012}):
  \emph{\bibinfo{title}{The Dynamic Turn in Quantum Logic}}.
\newblock {\slshape \bibinfo{journal}{Synthese}}
  \bibinfo{volume}{186}(\bibinfo{number}{3}), pp. \bibinfo{pages}{753--773},
  \doi{10.1007/s11229-011-9915-7}.

\bibitemdeclare{article}{birk}
\bibitem{birk}
\bibinfo{author}{G.~\surnamestart Birkhoff\surnameend} \&
  \bibinfo{author}{J.~\surnamestart von Neumann\surnameend}
  (\bibinfo{year}{1936}): \emph{\bibinfo{title}{The Logic of Quantum
  Mechanics}}.
\newblock {\slshape \bibinfo{journal}{The Annals of Mathematics}}
  \bibinfo{volume}{37}(\bibinfo{number}{4}), pp. \bibinfo{pages}{823--843},
  \doi{10.2307/1968621}.

\bibitemdeclare{article}{OBrunet}
\bibitem{OBrunet}
\bibinfo{author}{O.~\surnamestart Brunet\surnameend} (\bibinfo{year}{2005}):
  \emph{\bibinfo{title}{A Rule-Based Logic for Quantum Information}}.
\newblock {\slshape \bibinfo{journal}{arXiv:cs/0504018}},
  \doi{10.48550/arXiv.cs/0504018}.

\bibitemdeclare{article}{Chajda}
\bibitem{Chajda}
\bibinfo{author}{I.~\surnamestart Chajda\surnameend} (\bibinfo{year}{2005}):
  \emph{\bibinfo{title}{Sheffer operation in ortholattices}}.
\newblock {\slshape \bibinfo{journal}{Acta Universitatis Palackianae
  Olomucensis. Facultas Rerum Naturalium. Mathematica}}
  \bibinfo{volume}{44(1)}, pp. \bibinfo{pages}{19--23}.

\bibitemdeclare{incollection}{HANDQ}
\bibitem{HANDQ}
\bibinfo{author}{M.~L.~Dalla \surnamestart Chiara\surnameend} \&
  \bibinfo{author}{R.~\surnamestart Giuntini\surnameend}
  (\bibinfo{year}{2002}): \emph{\bibinfo{title}{Quantum Logics}}.
\newblock In \bibinfo{editor}{D.~M. \surnamestart Gabbay\surnameend} \&
  \bibinfo{editor}{F.~\surnamestart Guenthner\surnameend}, editors: {\slshape
  \bibinfo{booktitle}{Handbook of Philosophical Logic 2nd Edition}},
  \bibinfo{volume}{6}, \bibinfo{publisher}{Springer, Dordrecht}, pp.
  \bibinfo{pages}{129--228}, \doi{10.1007/978-94-017-0460-1_2}.

\bibitemdeclare{article}{uwe}
\bibitem{uwe}
\bibinfo{author}{U.~\surnamestart Egly\surnameend} \&
  \bibinfo{author}{H.~\surnamestart Tompits\surnameend} (\bibinfo{year}{2003}):
  \emph{\bibinfo{title}{On Different Proof-Search Strategies for Orthologic}}.
\newblock {\slshape \bibinfo{journal}{Studia Logica}} \bibinfo{volume}{73}, pp.
  \bibinfo{pages}{131--152}, \doi{10.1023/A:1022993408070}.

\bibitemdeclare{book}{HANDQS}
\bibitem{HANDQS}
\bibinfo{author}{K.~\surnamestart Engesser\surnameend},
  \bibinfo{author}{D.~\surnamestart Gabbay\surnameend} \&
  \bibinfo{author}{D.~\surnamestart Lehmann\surnameend} (\bibinfo{year}{2008}):
  \emph{\bibinfo{title}{Handbook of Quantum Logic and Quantum Structures:
  Quantum Logic}}.
\newblock \bibinfo{publisher}{Elsevier}, \doi{10.1016/C2009-0-16935-0}.

\bibitemdeclare{article}{fag}
\bibitem{fag}
\bibinfo{author}{C.~\surnamestart Faggian\surnameend} \&
  \bibinfo{author}{G.~\surnamestart Sambin\surnameend} (\bibinfo{year}{1998}):
  \emph{\bibinfo{title}{From Basic Logic to Quantum Logics with
  Cut-Elimination}}.
\newblock {\slshape \bibinfo{journal}{International Journal of Theoretical
  Physics}} \bibinfo{volume}{37(1)}, pp. \bibinfo{pages}{31--37},
  \doi{10.1023/A:1026652903971}.

\bibitemdeclare{article}{Fazio2023}
\bibitem{Fazio2023}
\bibinfo{author}{D.~\surnamestart Fazio\surnameend},
  \bibinfo{author}{A.~\surnamestart Ledda\surnameend},
  \bibinfo{author}{F.~\surnamestart Paoli\surnameend} \&
  \bibinfo{author}{G.~St. \surnamestart John\surnameend}
  (\bibinfo{year}{2023}): \emph{\bibinfo{title}{A Substructural Gentzen
  Calculus for Orthomodular Quantum Logic}}.
\newblock {\slshape \bibinfo{journal}{Review of Symbolic Logic}}
  \bibinfo{volume}{16}(\bibinfo{number}{4}), pp. \bibinfo{pages}{1177--1198},
  \doi{10.1017/S1755020322000016}.

\bibitemdeclare{article}{fin}
\bibitem{fin}
\bibinfo{author}{P.~D. \surnamestart Finch\surnameend} (\bibinfo{year}{1970}):
  \emph{\bibinfo{title}{Quantum Logic as an Implication Algebra}}.
\newblock {\slshape \bibinfo{journal}{Bulletin of the Australian Mathematical
  Society}} \bibinfo{volume}{2(1)}, pp. \bibinfo{pages}{101--106},
  \doi{10.1017/S0004972700041642}.

\bibitemdeclare{article}{har}
\bibitem{har}
\bibinfo{author}{G.~M. \surnamestart Hardegree\surnameend}
  (\bibinfo{year}{1981}): \emph{\bibinfo{title}{Material Implication in
  Orthomodular (and Boolean) Lattices}}.
\newblock {\slshape \bibinfo{journal}{Notre Dame Journal of Formal Logic}}
  \bibinfo{volume}{22(2)}, pp. \bibinfo{pages}{163--182},
  \doi{10.1305/ndjfl/1093883401}.

\bibitemdeclare{article}{kashima1994}
\bibitem{kashima1994}
\bibinfo{author}{R.~\surnamestart Kashima\surnameend} (\bibinfo{year}{1994}):
  \emph{\bibinfo{title}{Cut-free Sequent Calculi for some Tense Logics}}.
\newblock {\slshape \bibinfo{journal}{Studia Logica}} \bibinfo{volume}{53}, pp.
  \bibinfo{pages}{119--135}, \doi{10.1007/BF01053026}.

\bibitemdeclare{article}{TK1}
\bibitem{TK1}
\bibinfo{author}{T.~\surnamestart Kawano\surnameend} (\bibinfo{year}{2018}):
  \emph{\bibinfo{title}{Advanced Kripke Frame for Quantum Logic}}.
\newblock {\slshape \bibinfo{journal}{Logic, Language, Information, and
  Computation. WoLLIC 2018. Lecture Notes in Computer Science}}
  \bibinfo{volume}{10944}, pp. \bibinfo{pages}{237--249},
  \doi{10.1007/978-3-662-57669-4_14}.

\bibitemdeclare{article}{tom3}
\bibitem{tom3}
\bibinfo{author}{T.~\surnamestart Kawano\surnameend} (\bibinfo{year}{2018}):
  \emph{\bibinfo{title}{Labeled Sequent Calculus for Orthologic}}.
\newblock {\slshape \bibinfo{journal}{Bulletin of the Section of Logic}}
  \bibinfo{volume}{47}(\bibinfo{number}{4}), pp. \bibinfo{pages}{217--232},
  \doi{10.18778/0138-0680.47.4.01}.

\bibitemdeclare{article}{TKMB}
\bibitem{TKMB}
\bibinfo{author}{T.~\surnamestart Kawano\surnameend} (\bibinfo{year}{2024}):
  \emph{\bibinfo{title}{Nested-Sequent Calculus for Modal Logic MB}}.
\newblock {\slshape \bibinfo{journal}{Electronic Proceedings in Theoretical
  Computer Science}} \bibinfo{volume}{415}, pp. \bibinfo{pages}{33--47},
  \doi{10.48550/arXiv.2501.00484}.

\bibitemdeclare{article}{Kornell2024}
\bibitem{Kornell2024}
\bibinfo{author}{A.~\surnamestart Kornell\surnameend} (\bibinfo{year}{2024}):
  \emph{\bibinfo{title}{A Natural Deduction System for Orthomodular Logic}}.
\newblock {\slshape \bibinfo{journal}{Review of Symbolic Logic}}
  \bibinfo{volume}{17(3)}, pp. \bibinfo{pages}{910--949},
  \doi{10.1017/S1755020323000229}.

\bibitemdeclare{article}{kotas}
\bibitem{kotas}
\bibinfo{author}{J.~\surnamestart Kotas\surnameend} (\bibinfo{year}{1967}):
  \emph{\bibinfo{title}{An Axiom System for the Modular Logic}}.
\newblock {\slshape \bibinfo{journal}{Studia Logica}} \bibinfo{volume}{21}, pp.
  \bibinfo{pages}{17--37}, \doi{10.1007/BF02123412}.

\bibitemdeclare{article}{McCune}
\bibitem{McCune}
\bibinfo{author}{W.~\surnamestart McCune\surnameend},
  \bibinfo{author}{R.~\surnamestart Padmanabhan\surnameend},
  \bibinfo{author}{M.~A. \surnamestart Rose\surnameend} \&
  \bibinfo{author}{R.~\surnamestart Veroff\surnameend} (\bibinfo{year}{2005}):
  \emph{\bibinfo{title}{Automated Discovery of Single Axioms for
  Ortholattices}}.
\newblock {\slshape \bibinfo{journal}{Algebra Universalis}}
  \bibinfo{volume}{52}, pp. \bibinfo{pages}{541--549},
  \doi{10.1007/s00012-004-1902-0}.

\bibitemdeclare{article}{nis1}
\bibitem{nis1}
\bibinfo{author}{H.~\surnamestart Nishimura\surnameend} (\bibinfo{year}{1980}):
  \emph{\bibinfo{title}{Sequential Method in Quantum Logic}}.
\newblock {\slshape \bibinfo{journal}{The Journal of Symbolic Logic}}
  \bibinfo{volume}{45}(\bibinfo{number}{2}), pp. \bibinfo{pages}{339--352},
  \doi{10.2307/2273194}.

\bibitemdeclare{article}{nis2}
\bibitem{nis2}
\bibinfo{author}{H.~\surnamestart Nishimura\surnameend} (\bibinfo{year}{1994}):
  \emph{\bibinfo{title}{Proof Theory for Minimal Quantum Logic I}}.
\newblock {\slshape \bibinfo{journal}{International Journal of Theoretical
  Physics}} \bibinfo{volume}{33}, pp. \bibinfo{pages}{103--113},
  \doi{10.1007/BF00671616}.

\bibitemdeclare{article}{ozawa2}
\bibitem{ozawa2}
\bibinfo{author}{M.~\surnamestart Ozawa\surnameend} (\bibinfo{year}{2017}):
  \emph{\bibinfo{title}{Orthomodular-Valued Models for Quantum Set Theory}}.
\newblock {\slshape \bibinfo{journal}{The Review of Symbolic Logic}}
  \bibinfo{volume}{10}, pp. \bibinfo{pages}{782--807},
  \doi{10.1017/S1755020317000120}.

\bibitemdeclare{article}{ozawa1}
\bibitem{ozawa1}
\bibinfo{author}{M.~\surnamestart Ozawa\surnameend} (\bibinfo{year}{2021}):
  \emph{\bibinfo{title}{Quantum Set Theory: Transfer Principle and De Morgan's
  Laws}}.
\newblock {\slshape \bibinfo{journal}{Annals of Pure and Applied Logic}}
  \bibinfo{volume}{172(4) 102938}, \doi{10.1016/j.apal.2020.102938}.

\bibitemdeclare{incollection}{Poggiolesi2009}
\bibitem{Poggiolesi2009}
\bibinfo{author}{F.~\surnamestart Poggiolesi\surnameend}
  (\bibinfo{year}{2009}): \emph{\bibinfo{title}{The Method of
  Tree-Hypersequents for Modal Propositional Logic}}.
\newblock In \bibinfo{editor}{D.~\surnamestart Makinson\surnameend},
  \bibinfo{editor}{J.~\surnamestart Malinowski\surnameend} \&
  \bibinfo{editor}{H.~\surnamestart Wansing\surnameend}, editors: {\slshape
  \bibinfo{booktitle}{Trends in Logic: Towards Mathematical Philsophy}},
  \bibinfo{publisher}{Springer}, pp. \bibinfo{pages}{31--51},
  \doi{10.1007/978-1-4020-9084-4_3}.

\bibitemdeclare{article}{samb}
\bibitem{samb}
\bibinfo{author}{G.~\surnamestart Sambin\surnameend},
  \bibinfo{author}{G.~\surnamestart Battilotti\surnameend} \&
  \bibinfo{author}{C.~\surnamestart Faggian\surnameend} (\bibinfo{year}{2000}):
  \emph{\bibinfo{title}{Basic logic: reflection, symmetry, visibility}}.
\newblock {\slshape \bibinfo{journal}{The Journal of Symbolic Logic}}
  \bibinfo{volume}{65}(\bibinfo{number}{3}), pp. \bibinfo{pages}{979--1013},
  \doi{10.2307/2586685}.

\bibitemdeclare{article}{sasaki}
\bibitem{sasaki}
\bibinfo{author}{U.~\surnamestart Sasaki\surnameend} (\bibinfo{year}{1954}):
  \emph{\bibinfo{title}{Orthocomplemented Lattices Satisfying the Exchange
  Axiom}}.
\newblock {\slshape \bibinfo{journal}{Journal of Science of the Hiroshima
  University}} \bibinfo{volume}{A 17}, pp. \bibinfo{pages}{293--302},
  \doi{10.32917/hmj/1557281141}.

\end{thebibliography}
\end{document}